\newtheorem{theorem}{Theorem}[section]
\newtheorem{definition}{Definition}[section]
\newtheorem{proposition}{Proposition}[section]
\newenvironment{defi}[1]{\begin{definition}{\bf#1}\rm}{\end{definition}}
\newtheorem{lemma}{Lemma}[section]
\newcommand{\R}{\mathbb R}
\newcommand{\s}{\ensuremath{\mathcal{S}}}
\newcommand{\ew}{\color{black}}
\newcommand{\rbt}{\ensuremath{\mathcal{T}_{0}^{2}}}
\newcommand{\be}{\begin{equation}}
\newcommand{\ee}{\end{equation}}
\newcommand{\bsplit}{\begin{split}}
\newcommand{\esplit}{\end{split}}
\newcommand{\bea}{\begin{eqnarray*}}
\newcommand{\eea}{\end{eqnarray*}}
\DeclareMathOperator\arctanh{arctanh}
\newcommand{\magnprime}[2]{\langle \sigma'_{#1} \rangle^{#2}}
\begin{document}

\title{{\bf Almost Gibbsian Measures on a {Cayley Tree}}}

 \author {Matteo D'Achille\ew\footnote{LAMA UPEC $\&$ CNRS, Universit\'e Paris-Est,  94010 Cr\'eteil, France,
 \newline
 email:  matteo.dachille@u-pec.fr}\ ,\; Arnaud Le Ny\ew \footnote{LAMA UPEC $\&$ CNRS, Universit\'e Paris-Est,  94010 Cr\'eteil, France,
 \newline
 email:  arnaud.le-ny@u-pec.fr.}
}

\maketitle

\begin{center}
{\bf Abstract:}
\end{center}

 We consider the ferromagnetic n.n.~Ising model on Cayley trees submitted to a modified majority rule transformation with overlapping cells already known to lead to non-Gibbsian measures. We describe the renormalized measures within the generalized Gibbs framework and prove that they are almost Gibbs at any temperature.

\bigskip
\footnotesize

 {\em  AMS 2000 subject classification}: Primary- 60K35; secondary- 82B20.

{\em Keywords and phrases}: Renormalized Gibbs measures, dependent percolation, Cayley trees. 

\normalsize
\section{Introduction}

In mathematical statistical mechanics, {\em Gibbs measures} have been rigorously designed to represent equilibrium states and to model phase transitions, following the pioneering works of Dobrushin, Lanford and Ruelle who described them as an extension of Markov chains, both dynamically and spatially, in terms of the specification of their conditional probabilities w.r.t.\ the outside of finite sets \cite{Dob68,LanRue69}. This {\em DLR approach} has been fully put on rigorous grounds in the eighties by Georgii \cite{HOG}. In the mean time, some pathologies of the formal definition of Gibbs measures arose within numerical studies of critical phenomena \cite{GP,I}, identified afterwards to be a manifestation of possible non-Gibbsianness of the renormalized measures \cite{VEFS}. This non-Gibbsianness has been since then mainly coined by the exhibition of bad configurations, which are points of essential discontinuity of the renormalized measures. This led to a Dobrushin program of restoration of Gibbsianness, launched by Dobrushin himself in a talk in Renkum in 1995 \cite{DOB}. Within this program, two main restoration notions have been proposed, {\em weak Gibbsianness} and {\em almost Gibbsianness}, the latter being stronger than the former\footnote{See \cite{Maes2} or \cite{KLNR04}, where one can learn that the weakly Gibbsian representation is indeed too weak due to a possible failure of the variational principle in a weak but non almost Gibbsian example.}. On $d$-dimensional integer lattices $\mathbb{Z}^{d}$, most initial efforts of restoration have been adressed to the decimation transformation, leading to an almost Gibbsian description at any temperature in a series of papers \cite{VEFS, FLNR03, ALN2}, while positive results concerning preservation of Gibbsianness near the critical point and in uniqueness regions have been proposed in \cite{Bertini,HKen, Ken,Martinelli}.

\smallskip

As it is usually difficult to evaluate the measure of the set of discontinuity points (so-called {\em bad configurations}), the approach leading to almost Gibbsianness on $\mathbb{Z}^{d}$ heavily relies on a specification-dependent variational principle ({\em zero relative entropy}  characterization) which is known to fail on trees  \cite{Bu,foll}. Due to this difficulty, while considering trees as lattices, most efforts have been addressed to the detection of non-Gibbsian measures, by renormalization \cite{HK} or {\em via} stochastic evolution (van Enter {\em et al.}, see {\em e.g.}\ \cite{vEEIK2012}), with the notable exception of the failure of almost Gibbsianness for the random-cluster measures proved in \cite{Hagg2}. While decimation is known to preserve Gibbs property on trees, we focus here on a majority rule transformation already known from \cite{ALN1} to cause failure of Gibbsianness, and prove that {the renormalized measure} is almost Gibbsian at any temperature (Theorem \ref{thm3}). Our strategy is elementary in that it involves transfer matrices, Markov chains, dependent percolation {or} moment estimates, and profits of the recursivity inherent to treatments on trees.

\section{Gibbsian and non-Gibbsian Measures on Trees}
\subsection{Gibbs measures}

In this paper we consider Ising spins with single-spin state-space $E=\{-1,1\}$ equipped with the $\sigma$-algebra $\mathcal{E}=\mathcal{P}(E)$ and the {\em a priori} couting measure $\rho_{0}=\frac{1}{2}\delta_{-1}+\frac{1}{2}\delta_{+1}$, where $\delta_i$ is the Dirac measure on $i \in E$.  For a given lattice $S$, the configuration space is the product measurable space $\Omega=E^{S}$ equipped with the product $\sigma$-algebra $\mathcal{F}=\mathcal{E}^{\otimes S}$ and the product measure $\rho=\rho_0^{\otimes S}$. As usual in statistical mechanics, we consider macroscopic states to be probability measures on $\Omega$, whose set is denoted by $\mathcal{M}^{+}_{1}(\Omega)$. We also denote by $\mathcal{S}$ the set of all the finite subsets of $S$ and for $V \in \s$, we consider the finite-volume configuration space $\Omega_V=E^V$, equipped with the product measurable structure $(\mathcal{F}_{V},\rho_{V})$, and denote $\omega_{V}$ the canonical projection of $\omega \in \Omega$ on $\Omega_{V}$. Similarly, for any ${V},{V}' \subset {S}$ such that ${V} \cap {V}' \neq \emptyset$, for all $\omega, \sigma \in \Omega$, we denote by $\omega_{V}\sigma_{{V}'}$ the element of $\Omega_{{V} \cup {V}'}$ which agrees with $\omega$ in ${V}$ and with $\sigma$ in ${V}'$. For any $V \in \mathcal{S}$, $|{V}|$ denotes the cardinality of ${V}$.

\smallskip

In this work, the lattices we consider are Cayley trees $S=\mathcal{T}^{k}$, for integers $k>0$,  that is, $(k+1)$-regular infinite trees (see \cite{Roz13} for {further details}). We focus here on the case $k=2$.

\bigskip

{\bf Ising Potential:} we consider nearest-neighbor ($n.n.$) potentials $\Psi = (\Psi_{A})_{A \in \mathcal{S}}$ s.t.,

\be \label{eq.isingpot}
\forall \omega \in \Omega,\Psi_{A}(\omega) = \begin{cases}
-J(i,j)\; \omega_{i}\omega_{j} & {\rm if} \; A = \{i,j\}\\
-h(i)\; \omega_{i} & {\rm if} \; A=\{i\} \\
0 & {\rm otherwise}.
\end{cases}
\ee
Here, $J: \mathcal{T}^{k}\times \mathcal{T}^{k} \longrightarrow  \mathbb{R}$ is the coupling function, {assumed to be non-{negative} in this ferromagnetic set-up}, and $h:\mathcal{T}^{k} \longrightarrow  \mathbb{R}$ is often called an external magnetic field. The Ising potential {(\ref{eq.isingpot})} is a prototype of uniform absolutely convergent ``potentials'' used to define Gibbs measures in this mathematical framework \cite{HOG}:

\begin{defi}{[UAC potential]} A potential $\Phi=\big(\Phi_A\big)_{A \in \mathcal{S}}$ is a family of local functions $\Phi_A$ that are $\mathcal{F}_A$-measurable. It is said to be {\em uniformly absolutely convergent (UAC)} iff
$$
\sum_{{ A \ni i,} A \in \mathcal{S}} \sup_{\omega \in \Omega} |\Phi_{A}(\omega)| < +\infty, \quad \forall i \in {S}.
$$
\end{defi}

\smallskip
For such a UAC potential $\Phi$, and for all configurations $\sigma \in \Omega$, we introduce the finite-volume Hamiltonian with boundary condition $\omega \ \in \Omega$, defined by
\be \label{eq.finvhbc}
{\mathcal{H}^{\Phi}_{V}(\sigma \mid \omega)} \coloneqq
  \sum_{A \in \mathcal{S},A \cap {V} \neq  \emptyset } \Phi_{A} (\sigma_{V} \omega_{{V}^{c}}).
\ee

In the particular case of free boundary conditions at finite-volume $V$, one writes 

$$
\mathcal{H}^{\Phi,f}_{V}(\sigma) \coloneqq 
  \sum_{A \subset {V}} \Phi_{A} (\sigma_{V}).
$$

Associated to (\ref{eq.finvhbc}), there are, at temperature $\beta^{-1} > 0$, Boltzmann--Gibbs weights $e^{- \beta {\mathcal{H}^{\Phi}_{V}(\sigma \mid \omega)}}$ and corresponding partition functions at finite-volume $V$ and boundary condition $\omega$,
\be
Z^{\beta \Phi}_{V}({\omega)} = \int_{\Omega_{V}}e^{- \beta \mathcal{H}^{\Phi}_{V}(\sigma \mid \omega)} \rho_{V}(d\sigma_{V}),
\ee
where  $\rho_{V}\coloneqq \rho_0^{\otimes {V}}$ is the {\em a priori} product measure at finite-volume $V$.

\begin{defi}{[Gibbs specifications]} 
For an UAC potential $\Phi$, the set of probability kernels $\gamma^{\beta \Phi} =  (\gamma^{\beta \Phi}_{{V}})_{{V} \in \mathcal{S}}$, defined for all ${V} \in \mathcal{S}$ and $\sigma,\omega \in \Omega$, {as}
\be
\gamma_{V}^{\beta \Phi}(\sigma \mid \omega)\coloneqq \frac{1}{Z^{\beta \Phi}_{V}(\omega)}e^{- \beta \mathcal{H}^{\Phi}_{V}(\sigma \mid \omega)},
\ee
is called a {\it Gibbs specification} for potential $\Phi$ at inverse temperature $\beta$.

\end{defi}
In the case of free boundary conditions, we write uniformly in $\omega \in \Omega$, for all $\sigma \in \Omega$,
$$
\gamma_V^{\beta \Phi,f}(\sigma)=\gamma_{V}^{\beta \Phi,f}(\sigma \mid \omega)\coloneqq \frac{1}{Z^{\beta \Phi, f}_{V}}e^{- \beta \mathcal{H}^{\Phi,f}_{V}(\sigma)}.
$$

More generally, a specification $\gamma=\big( \gamma_V \big)_{V \in \mathcal{S}}$ is a family of probability kernels satisfying extra properties (properness and consistency, see \cite{Pres,HOG,VEFS}) so that they can represent a regular system of conditional probabilities of {\em some} probability measures $\mu$, in such a way that the {\em DLR equations}
\be \label{DLR}
\mu(\sigma \mid \mathcal{F}_{V^c}) (\cdot) = \gamma_{V}(\sigma \mid \cdot), \; \mu-{\rm a.s.},
\ee
are valid for any $V \in \s$ {and} $\sigma \in \Omega$.

\smallskip

This {\em DLR approach} to describe probability measures on infinite product probability spaces is crucial in our framework because it allows the definition of many different measures specified by the same specification. If the latter situation occurs, we say that there is a {\em phase transition}.

\begin{defi}{[Gibbs Measures]} 
A Gibbs measure $\mu$ is a probability measure on $\Omega$ for which the DLR equations (\ref{DLR}) are valid for a Gibbs specification $\gamma=\gamma^{\beta \Phi}$, at some inverse temperature $\beta >0$ and for a UAC potential $\Phi$. Alternatively, it is a probability measure $\mu$ which is invariant under the action of the kernels $\gamma_V^{\beta \Phi}$ for any finite-volume $V$,  
$$
\mu = \mu \gamma_V^{\beta \Phi}, \forall V \in \s.
$$

\end{defi}
The set of {\bf Gibbs measures} specified by the specification $\gamma^{\beta \Phi}$ is denoted by $\mathcal{G}(\gamma^{\beta \Phi})$, and the study of its  properties is a central aim in rigorous statistical mechanics, see \cite{HOG,VEFS,ALNENS}. It is a convex set ({more precisely}, a Choquet simplex) and one is mainly interested in the study of its extreme elements, called extremal measures (or sometimes {``states''}).

\subsection{Gibbs measures on Cayley trees}

Ising models on Cayley trees $\mathcal{T}^{k}$ have been first {rigorously} shown to exhibit phase transition by Preston \cite{Pres2} in 1974. The critical temperature has been explicitely identified in 1989 by Lyons by exploiting close links with branching processes due to the tree structures, leading to the by-now famous formula for the critical inverse temperature, {namely}
$$
\beta_{\rm c}=\arctanh{1/k}.
$$
The set of translation-invariant extremal Gibbs measures has been studied by Bleher {\em et al.} in the early nineties \cite{BL, BLG}, see also {\em e.g.}~\cite{spi,vEEIK2012}.

\begin{proposition}{[Gibbs measures of Ising models on Cayley trees, see~\cite{Pres2}]}

\begin{enumerate}
\item Let  $\mu^{+},\mu^{-}$ be the limiting measures\footnote{In the sense of a convergence along a net directed by inclusion, denoted $V \uparrow \s$, see \cite{HOG}.} of Ising specifications  with all $+$ and all $-$ boundary conditions,
$$
\mu^+ (\cdot) \coloneqq \lim_{V \uparrow  \s} \gamma_V^{\beta \Psi}(\cdot | +) \; {\rm and} \;  \mu^- (\cdot) \coloneqq \lim_{V \uparrow  \s} \gamma_V^{\beta \Psi}(\cdot | -) .
$$

Then $\mu^{+},\mu^{-} \in \mathcal{G}(\gamma^{\beta  \Psi})$ are extremal at all inverse temperatures $\beta >0$. 

\item  Let $\mu^{\#}$ be the limiting measure of Ising specifications with free boundary conditions, 
$$
\mu^{\#}(\cdot) \coloneqq \lim_{V \uparrow   \s} \gamma_V^{\beta \Psi,f}(\cdot) {\; .}
$$

Then $\mu^{\#}$ is extremal at high temperatures, and becomes non-extremal at low temperatures. The transition occurs at the {\it spin-glass} inverse temperature given by 
$$
\beta_{\rm SG}=\arctanh{1/\sqrt{k}}.
$$
\end{enumerate}
\end{proposition}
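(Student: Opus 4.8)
The plan is to handle the two homogeneous boundary conditions first and the free one second, exploiting throughout that the potential is ferromagnetic ($J\geq 0$), so that the FKG inequality and monotonicity in boundary conditions are at our disposal. For the existence of $\mu^\pm$, I would argue that for any bounded increasing local function $f$, monotonicity in the boundary condition together with FKG makes the net $V\mapsto \gamma_V^{\beta\Psi}(f\mid +)$ non-increasing along $V\uparrow\s$ (and $V\mapsto\gamma_V^{\beta\Psi}(f\mid -)$ non-decreasing); a bounded monotone net converges, and since bounded increasing local functions linearly span the local functions (by Möbius inversion over the finite product of chains), this fixes the weak limits $\mu^\pm$. That $\mu^\pm\in\gee(\gamma^{\beta\Psi})$ follows from the quasilocality (Feller continuity) of the UAC specification: the consistency relations $\gamma_W\gamma_V=\gamma_W$ for $W\subset V$ pass to the limit, yielding the DLR equations \eqref{DLR}. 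For extremality, I would use that $\mu^+$ is the FKG-maximal Gibbs measure: every $\nu\in\gee(\gamma^{\beta\Psi})$ satisfies $\nu\preceq\mu^+$ (stochastic domination), being a mixture of finite-volume limits with boundary conditions $\preceq +$. If $\mu^+=\lambda\nu_1+(1-\lambda)\nu_2$ with $\nu_i\in\gee$ and $0<\lambda<1$, then for every bounded increasing local $f$ the inequality $\nu_i(f)\leq\mu^+(f)$ forces $\nu_1(f)=\nu_2(f)=\mu^+(f)$; by the same spanning this determines the measures, so $\nu_1=\nu_2=\mu^+$ and $\mu^+$ is extremal at every $\beta>0$. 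The argument for $\mu^-$ is symmetric.

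The free measure $\mu^\#$ is where the tree structure enters decisively. Because $\mathcal T^k$ contains no loops, the free two-point function factorizes exactly along the unique self-avoiding path, $\gamma_{B_n}^{\beta\Psi,f}(\si_x\si_y)=(\tanh\beta J)^{d(x,y)}$ for all interior $x,y$ in the ball $B_n$, independently of $n$. More generally every local correlation stabilizes, so the net $\gamma_{B_n}^{\beta\Psi,f}$ converges to a spin-flip-symmetric measure $\mu^\#$; transfer-matrix and Markov-chain bookkeeping on the rooted tree make these expressions explicit and identify $\mu^\#$ as the symmetric tree-indexed Markov chain whose single-edge transfer operator has eigenvalues $1$ and $\tanh\beta J$.

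Extremality of $\mu^\#$ is equivalent to triviality of its tail $\sigma$-algebra, which on the tree is the \emph{reconstruction} problem for the root spin from deep generations. I would analyze it by a second-moment (census) estimate: writing $S_n=\sum_{x\in\partial_n}\si_x$ for the generation-$n$ magnetization, one computes $\mu^\#(\si_0 S_n)$ and $\mu^\#(S_n^2)$ via the factorized correlations and finds that the signal-to-noise ratio $\mu^\#(\si_0 S_n)^2/\mu^\#(S_n^2)$ stays bounded away from $0$ precisely when $k(\tanh\beta J)^2>1$. Indeed $|\partial_n|\sim k^n$ while each root-to-boundary correlation contributes $(\tanh\beta J)^n$, so the threshold is exactly $k(\tanh\beta J)^2=1$, that is $\tanh\beta J=1/\sqrt k$, i.e.\ $\beta=\beta_{\rm SG}=\arctanh(1/\sqrt k)$. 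Above $\beta_{\rm SG}$ the census $\mathrm{sign}(S_n)$ reconstructs $\si_0$, the tail is nontrivial, and $\mu^\#$ is non-extremal; below it this estimator carries no asymptotic information.

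The main obstacle is the sharpness of this threshold, namely that $\mu^\#$ is \emph{extremal} throughout $\beta\leq\beta_{\rm SG}$. The second-moment argument cleanly delivers non-extremality above the Kesten--Stigum bound, but the converse — that \emph{no} estimator, not merely the linear census, can reconstruct $\si_0$ below $\beta_{\rm SG}$ — rests on the coincidence, special to the symmetric Ising channel, between the reconstruction threshold and the Kesten--Stigum bound. I would establish this through an $L^2$ contraction estimate for the conditional-magnetization recursion propagating toward the root, or else invoke the sharp analysis of Bleher et al.\ \cite{BL,BLG}; by contrast the existence and DLR parts of both items are routine consequences of monotonicity and quasilocality.
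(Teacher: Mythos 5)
There is nothing in the paper to compare your argument against: this proposition is stated as background and is not proved there, being quoted from Preston \cite{Pres2}, with the sharp extremality threshold for $\mu^{\#}$ due to Bleher \cite{BL}, Bleher--Ruiz--Zagrebnov \cite{BRZ1}, Ioffe \cite{jof,ioffe2} and Evans--Kenyon--Peres--Schulman \cite{EKPS}. Judged on its own, your sketch reconstructs the standard proofs correctly in most respects. For item 1 your route is the classical ferromagnetic one: FKG monotonicity gives convergence of the nets $\gamma_V^{\beta\Psi}(\cdot\mid\pm)$; quasilocality of the UAC specification lets consistency pass to the limit (note the consistency relation reads $\gamma_V\gamma_W=\gamma_V$ for $W\subset V$, not $\gamma_W\gamma_V=\gamma_W$ as you wrote); and extremality follows from stochastic maximality of $\mu^+$ in $\mathcal{G}(\gamma^{\beta\Psi})$, since any proper convex decomposition has components dominated by $\mu^+$ yet averaging to it, forcing equality on increasing local functions and hence equality of measures. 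For item 2, your census/second-moment computation correctly identifies the Kesten--Stigum threshold $k(\tanh\beta J)^2=1$, i.e.\ $\beta_{\rm SG}=\arctanh(1/\sqrt{k})$, and correctly yields non-triviality of the tail, hence non-extremality, for $\beta>\beta_{\rm SG}$.

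The genuine gap --- which you flag yourself --- is the converse: extremality of $\mu^{\#}$ for all $\beta\leq\beta_{\rm SG}$. The second-moment argument only shows that the \emph{linear} census estimator carries no information below the threshold; ruling out \emph{every} tail-measurable estimator is the hard content of the proposition, and the one-line appeal to ``an $L^2$ contraction estimate for the conditional-magnetization recursion'' is a statement of intent, not a proof: making that nonlinear recursion contract uniformly over boundary data, up to and including $\beta=\beta_{\rm SG}$, is precisely what \cite{BL,BRZ1} (and, by different methods, \cite{jof,ioffe2,EKPS}) accomplish, and it is delicate exactly at the threshold. Deferring that step to the literature is legitimate here --- the paper defers the entire proposition --- but your pointer should go to \cite{BL,BRZ1,jof,ioffe2,EKPS} rather than to \cite{BLG}, which concerns the uncountable family of low-temperature extremal states, not the purity of the free measure.
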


The particular question of the extremality and extremal decomposition of the free boundary condition case has been a long standing question studied since the seventies \cite{BL, BLG, GRS12, GMRS20, jof, ioffe2, EKPS,Miya}. A considerable number of extremal Gibbs measures has been first constructed  by Bleher
and Ganikhodjaev:
\begin{theorem}~\cite{BLG} \label{BLG}
For $\beta > \beta_c$, the number of extreme points of
$\mathcal{G}(\gamma^{\beta \Psi})$ is uncountable.
\end{theorem}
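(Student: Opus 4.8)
The plan is to reduce the counting of extreme Gibbs measures to the analysis of tree-indexed Markov chains and their boundary laws, exploiting the recursive (self-similar) structure of the Cayley tree. Every Gibbs measure that is also a Markov chain along the tree is encoded by a family of effective fields $(h_x)_{x \in \mathcal{T}^k}$, obtained by the cavity/transfer-matrix recursion: orienting the tree away from a root $x_0$ and writing $\theta := \tanh(\beta J)$, a family $(h_x)$ is admissible iff, at every vertex,
\be
h_x = \sum_{y \,\downarrow\, x} g(h_y), \qquad g(h) := \tfrac{1}{\beta}\arctanh\!\big(\theta \tanh(\beta h)\big),
\ee
the sum running over the $k$ children $y$ of $x$ (here $k=2$). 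Each bounded admissible family determines a measure $\mu_{(h_x)} \in \gee(\gamma^{\beta\Psi})$, and two admissible families with differing local magnetizations give two \emph{distinct} such measures. So it suffices to produce an uncountable family of bounded admissible solutions, pairwise separated by some local observable, each giving an \emph{extremal} measure.

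First I would record the mechanism that drives the construction: since $g'(0)=\theta$ and $\beta > \beta_{\rm c}=\arctanh(1/k)$ forces $k\theta>1$, the free solution $h\equiv 0$ is a \emph{repelling} fixed point of the level map $h \mapsto k\,g(h)$, whereas the two symmetric solutions $\pm h^{*}$ (the fields of $\mu^{+}$ and $\mu^{-}$) are attracting. Equivalently, the inward (inverse) recursion contracts toward $0$, and at each vertex the $k$ subtrees can be steered \emph{independently} toward $+$-- or $-$--type behaviour. I would make this quantitative by running the recursion inward from level $n$ with prescribed boundary fields on the leaves and showing, via the expansion estimate near $0$ together with a monotonicity/compactness bound that keeps the iterates in the fixed interval $[-h^{*},h^{*}]$, that the resulting root field depends continuously and non-degenerately on the boundary data.

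The uncountability then comes from coding. To each element $\epsilon$ of an uncountable index set --- e.g.\ the ends of $\mathcal{T}^k$, or a Cantor set $\{-,+\}^{\N}$ of asymptotic choices along a fixed ray --- I would associate boundary data asymptotically $+$ along the directions selected by $\epsilon$ and $-$ along the others, take the $n\to\infty$ limit of the finite-volume fields, and obtain a bounded admissible family $(h_x^{\epsilon})$. Two delicate steps remain. \textbf{Injectivity:} distinct $\epsilon$ must yield distinct measures, which I would get by exhibiting a vertex (or a finite window) whose magnetization, read off from $(h_x^{\epsilon})$ through the recursion, separates the parameters; the expansion $k\theta>1$ is precisely what prevents the continuum of choices from collapsing onto $\{\mu^{+},\mu^{-}\}$. \textbf{Extremality:} each $\mu_{(h_x^{\epsilon})}$ is a tree-indexed Markov chain, and I would establish triviality of its tail $\sigma$-algebra --- hence extremality in $\gee(\gamma^{\beta\Psi})$ --- from the boundary-law representation, controlling the backward recursion so that the influence of the configuration outside a far-away ball vanishes.

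The main obstacle is exactly the joint control demanded in the last paragraph: upgrading the \emph{local} expansion at $0$ to a \emph{global} statement guaranteeing that (i) the inward iteration stays in a compact region and converges for the chosen inhomogeneous boundary data, and (ii) the map $\epsilon \mapsto \mu_{(h_x^{\epsilon})}$ is injective into the extreme boundary of $\gee(\gamma^{\beta\Psi})$. Both require estimates on the \emph{composition} of the maps $g$ along deep branches rather than merely their behaviour at a single fixed point; once these are secured, the cardinality of the uncountable index set transfers directly to the set of extreme points, proving the claim.
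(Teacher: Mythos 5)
The paper itself offers no proof of this statement: Theorem \ref{BLG} is imported verbatim from Bleher--Ganikhodjaev \cite{BLG}, the only indication of its proof being the remark that follows it, namely that the extreme points ``can be selected by uncountably many different boundary conditions for which half of the Cayley tree is occupied by the plus and the other half by the minus.'' So the comparison must be with that classical construction, and your proposal is, in substance, a boundary-law reformulation of exactly it: coding measures by ends of the tree (equivalently, half-$+$/half-$-$ boundary data along a ray), with $\beta>\beta_{\rm c}$, i.e.\ $k\theta>1$, supplying the mechanism that keeps the coded family from collapsing onto $\{\mu^+,\mu^-\}$. The recursion $h_x=\sum_{y\downarrow x}g(h_y)$, the correspondence between bounded admissible families and tree-indexed Markov chains consistent with $\gamma^{\beta\Psi}$, and the fixed-point picture ($0$ repelling, $\pm h^{*}$ attracting) are all standard and correctly invoked.

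The genuine gap is the extremality step, and it is not peripheral: uncountably many \emph{distinct Gibbs measures} exist trivially once $\mu^+\neq\mu^-$ (take convex combinations), so the entire content of the theorem sits in the word ``extreme.'' Possessing a bounded boundary law does not imply extremality: the free measure $\mu^{\#}$ corresponds to the bounded admissible family $h\equiv 0$, yet it is non-extremal for all $\beta>\beta_{\rm SG}$ (Proposition 2.1 of this paper). Hence your plan to ``establish triviality of the tail $\sigma$-algebra from the boundary-law representation'' names the problem rather than solves it; what is needed is a quantitative reconstruction-type estimate showing that, for the inhomogeneous solutions $(h_x^{\epsilon})$, the conditional law at the root given the configuration on the sphere $W_R$ loses its dependence on that configuration as $R\to\infty$. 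This is precisely the hard core of \cite{BLG} and of the subsequent extremality literature (\cite{BRZ1,jof,ioffe2,EKPS}), and nothing in your sketch substitutes for it. A secondary slip: the inward recursion (leaves toward the root) is the iteration of $h\mapsto k\,g(h)$, under which $0$ is \emph{repelling}; it is the outward, inverse iteration that contracts near $0$, so the parenthetical ``inward (inverse) recursion contracts toward $0$'' has the orientation backwards. Your later estimates correctly speak of expansion, but the existence and injectivity arguments must be written with this orientation fixed, since the inward expansion is exactly what makes the root field sensitive to the coding $\epsilon$.
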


 These extreme points can be selected by uncountably many different
boundary conditions for which ``half'' of the Cayley tree is occupied
by the ``plus'' and the other half by the ``minus''. Note that Higuchi \cite{hig} constructed earlier other non-translation-invariant
extreme points. It has been an open
question for a long time to know if we have then described {\em all} the extreme
points of $\mathcal{G}(\gamma)$, and to derive the convex decomposition of any Gibbs measures w.r.t.\ these extremal ones. 
A decade ago, another uncountable family of non-translation invariant extremal Gibbs measures different from the Bleher--Ganikhodjaev and the Higuchi ones have been described by Akin {\em et al.}\ \cite{ART}, while the concept of weakly periodic Gibbs measures has been introduced by Rakhmatullaev and Rozikov \cite{RR08} to describe yet other non-translation invariant extremal states. These extremal and non-translation invariant Gibbs measures where described particularly by Gandolfi {\em et al.} as a manifold in \cite{GRS12} in a framework that eventually led recently to the convex decomposition of the so-called {\em free measure} $\mu^{\#}$ into extremal points  below  the spin-glass transition temperature~\cite{GMRS20}.

\subsection{A criterion for Gibbsianness: Quasilocality } An important feature in the theory of Gibbs measures is {\it quasilocality} which, heuristically, ensures that a measurement done at a precise point in the system is not impacted too much by far away perturbations. 

\begin{defi}{[Quasilocal function]}
A function $f \colon \Omega \to \mathbb R$ is said to be \emph{quasilocal} if it is the uniform limit of a sequence of local\footnote{Recall that a function $f \colon \Omega \to \mathbb R$ is said to be \emph{local} if $\exists V \in \mathcal{S}$ s.t.\ $f$ is $\mathcal{F}_{V}$-measurable (that is, $f$ depends only on a finite number of spins). We denote by $\mathcal{F}_{{\rm loc}}$ the set of such functions.} functions $f_{n}$, {\em i.e.} 
\begin{displaymath}
\lim_{n \to \infty} \sup_{\omega \in \Omega} \mid f_{n}(\omega) - f(\omega) \mid = 0 .
\end{displaymath}
\end{defi}

A specification $\gamma$ is {\it quasilocal} if its kernels leave invariant the set of quasilocal functions $\mathcal{F}_{\rm qloc}$; this means in particular that for any local function $f$ and finite set $V$, the function $\gamma_V f$ defined for any boundary condition $\omega$ as
$$
\gamma_{V}f (\omega)= \int_\Omega f(\sigma) \gamma_V (d\sigma \mid \omega) 
$$
is quasilocal. 

In particular, for any finite-volume $V$, $\gamma_{V}f$ is a {\em continuous} function for any local function $f \in \mathcal{F}_{\rm loc}$. A measure is quasilocal if it is specified by a quasilocal specification. Remarkably, it has been shown by Kozlov~\cite{ko} and Sullivan~\cite{sull} that quasilocality plus a natural positivity requirement ({\it non-nullness}\footnote{A specification $\gamma$ is {\it non-null} if $\rho(A)>0 \Longrightarrow$ $\gamma_{V}(A \mid \omega) >0$  for all ${V} \in \s, A \in \mathcal{F}, \omega \in \Omega$. Non-nullness has been sometimes named ``finite-energy condition''.}) fully characterize Gibbs measures. Thus, if $\Phi$ is a UAC potential, any Gibbs specification $\gamma^{\beta \Phi}$ is quasilocal, so that {\bf any Gibbs measure is quasilocal}. In particular, continuity properties are preserved and one can learn in {\em e.g.}\ \cite{VEFS} that conditional expectations of local functions do not admit points of essential discontinuity (see also~\cite{Fer05,ALNENS} for details).

\smallskip

Being quasilocal for a Gibbs measure means in particular  that it is not possible to build a version of the conditional expectation of a local function which would be discontinuous as a function of the boundary condition. More precisely we have the following
\begin{defi}{[Point of essential discontinuity]}\label{def.ped}
A configuration $\omega^*$ is a {\em point of essential discontinuity} for a measure $\mu \in \mathcal{M}^{+}_{1}(\Omega)$ if $\exists {V} \in \mathcal{S}$, a function $f \in \mathcal{F}_{\rm loc}$, $\delta > 0$ and two neighborhoods of $\omega^*$, denoted $\mathcal{N}^{1}_{{V}}(\omega^*),\, \mathcal{N}^{2}_{{V}}(\omega^*)$, {and defined as
$$
\mathcal{N}^{i}_{{V}}(\omega^*) = \left\lbrace \sigma \in \Omega \; \text{s.t.} \; \sigma_{V}= \omega^{*}_{V}\right\rbrace, \quad i=1,2,
$$
}
such that
$$
\forall \xi \in \mathcal{N}^{1}_{{V}}(\omega^*), \; \forall \zeta \in \mathcal{N}^{2}_{{V}}(\omega^*), \quad \left | \mu\left[f  \mid \mathcal{F}_{{V}^{c}}\right] (\xi)-\mu\left[f  \mid \mathcal{F}_{{V}^{c}}\right](\zeta) \right| > \delta.
$$ 
\end{defi}



For our purposes, the relevance of the Kozlov--Sullivan characterization, summarized by
 $$
 {\rm Gibbsian} \implies   {\rm quasilocal} \implies {\rm {essential \; continuity}}{,}
 $$
comes from the fact that, for non-null Gibbs measures {with finite state-space}, continuity is equivalent to quasilocality \cite{HOG}. Thus, a {Gibbs} measure for Ising spins does not admit points of essential discontinuity. As a consequence, this Kozlov--Sullivan characterization can be used as a proxy to prove non-Gibbsianness: it suffices to exhibit essential discontinuity (in the sense of Definition~\ref{def.ped}) of  conditional expectations as a function of the boundary conditions (for example, magnetization at a given site in our Ising context).
\smallskip

 In this paper we are concerned with the size of the set of such essential discontinuity points and show that these discontinuity points form a null-set at any temperature, rendering thus the transformed measure {\em almost Gibbsian}, in the sense {recalled} below.

\subsection{Almost Gibbsian and weakly Gibbsian measures}

For a given specification $\gamma$, we denote by $\Omega_\gamma$ its set of points of essential continuity ({\em good configurations}). Of course, for a Gibbs specification, $\Omega_\gamma=\Omega$. 

\begin{defi}{[Almost Gibbsian measure]} A probability measure $\mu \in \mathcal{M}_{1}(\Omega)$ is {\it almost Gibbsian} if there exists a specification $\gamma$ such that
$$
\mu \in \mathcal{G}({{\gamma}})\quad {\rm and} \quad \mu(\Omega_{\gamma})=1.
$$
\end{defi}

This is equivalent to the statement that there exist regular versions of finite-volume conditional probabilities of $\mu$ which are continuous as functions of the boundary conditions, possibly except on a zero-measure set. 
\begin{defi}{[Weakly Gibbsian measure]} A probability measure $\mu \in \mathcal{M}_{1}(\Omega)$ is {\it weakly Gibbsian} if $\mu \in \mathcal{G}({\gamma^{\beta \Phi}})$ with a potential $\Phi$ converging on a full set ($\mu(\Omega_\Phi)=1$), {\em i.e.} s.t.
$$
\forall V \in \mathcal{S},\; \sum_{A \cap {V} \neq \emptyset} \Phi_A(\omega) < \infty, \; {\rm for} \; \mu-{\rm a.e.}(\omega).
$$
\end{defi}

Note that an almost-sure version of the Kozlov-Sullivan representation (see \cite{Maes2}) allows to reconstruct an almost surely convergent potential consistent with an almost Gibbsian  measure, so that
almost Gibbs implies weakly Gibbs.

\section{{R}enormalization transformation and  results}
\subsection{Modified majority rule on a Cayley tree}

{Let us consider the binary Cayley tree at $k=2$ (our discussion is easily adaptable to higher $k\geq 3$). Upon deletion of an arbitrary edge $\langle i_0i_1 \rangle$ from the
Cayley tree $\mathcal{T}^2$, we get two (identical) rooted Cayley
trees, call them $\mathcal{T}_{0}^2$ and $\mathcal{T}_{1}^2$, for which the following holds:

\begin{theorem}{\cite{BLG}}\label{thm.split} \label{BlG}
A measure $\mu \in \mathcal{M}_{1}(\Omega)$ is an extreme Gibbs distribution on
$\mathcal{T}^2$ if and only if there exist 
extreme Gibbs distributions $\mu_0,\mu_1$ on $\mathcal{T}_{0}^2$ and $\mathcal{T}_{1}^2$
respectively, such that the following {splitting property} holds 
\begin{displaymath}
\forall A \in \mathcal{F},\; \mu (A)=\frac{1}{Z} \int_A  \mu_0 ({\sigma}_{\mathcal{T}_{0}^2})\mu_1 ({\sigma}_{\mathcal{T}_{1}^2})e^{\beta J\sigma_{i_0}\sigma_{i_1}} \rho(d \sigma),
\end{displaymath}
where $Z$ is a normalizing constant depending on $\beta$. In such a case, $\mu_0$ and $\mu_1$ are uniquely determined by $\mu$.
\end{theorem}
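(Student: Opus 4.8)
The plan is to prove the splitting property by exploiting the tree structure directly through the DLR equations and the recursive decoupling that deletion of the edge $\langle i_0 i_1 \rangle$ induces. First I would establish the ``if'' direction, which is the constructive half: given extreme Gibbs distributions $\mu_0, \mu_1$ on the two rooted subtrees $\rbt$ and $\mathcal{T}_1^2$, I would define $\mu$ by the displayed integral formula and verify that it satisfies the DLR equations (\ref{DLR}) for the full Ising specification $\gamma^{\beta\Psi}$ on $\mathcal{T}^2$. The key observation is that the only potential term coupling the two subtrees is the nearest-neighbor bond $\Psi_{\{i_0,i_1\}}(\sigma) = -J\,\sigma_{i_0}\sigma_{i_1}$, so the Hamiltonian splits as $\mathcal{H}^{\Psi}_V = \mathcal{H}^{\Psi,0}_{V\cap\rbt} + \mathcal{H}^{\Psi,1}_{V\cap\mathcal{T}_1^2} - J\sigma_{i_0}\sigma_{i_1}$. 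Given a finite volume $V$, I would condition on the boundary configuration and factor the Boltzmann weight $e^{-\beta\mathcal{H}^{\Psi}_V}$ into the two subtree contributions times the single bond factor $e^{\beta J \sigma_{i_0}\sigma_{i_1}}$; the DLR consistency for $\mu$ then reduces to the DLR consistency of $\mu_0$ and $\mu_1$ on their respective subtrees, which holds by hypothesis.

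For the ``only if'' direction I would start from an extreme $\mu \in \gee(\gamma^{\beta\Psi})$ and \emph{define} $\mu_0, \mu_1$ as the appropriate conditional marginals. Concretely, I would define $\mu_0$ on $\Omega_{\rbt} = E^{\rbt}$ as a suitably normalized conditional law obtained by disintegrating $\mu$ with respect to the spin at the cut vertex $i_0$ (or equivalently along the $\sigma$-algebra generated by the $\mathcal{T}_1^2$-side), and symmetrically for $\mu_1$. The main verification is twofold: first, that $\mu_0$ and $\mu_1$ so defined are themselves Gibbs measures for the restricted Ising specifications on the rooted trees, which again follows because the DLR equations for $\mu$ localize onto each subtree once the coupling bond is held fixed; and second, that they are \emph{extreme}, which I expect to follow from the extremality of $\mu$ together with the uniqueness clause—any nontrivial convex decomposition of $\mu_0$ would lift, via the product-with-bond construction, to a nontrivial decomposition of $\mu$, contradicting extremality.

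The uniqueness assertion (that $\mu_0,\mu_1$ are determined by $\mu$) I would obtain by noting that the marginals of $\mu$ on cylinder events within each subtree, together with the explicit bond factor $e^{\beta J \sigma_{i_0}\sigma_{i_1}}$ and the normalization $Z$, pin down $\mu_0$ and $\mu_1$ uniquely: if two pairs $(\mu_0,\mu_1)$ and $(\mu_0',\mu_1')$ both reproduced $\mu$ through the formula, then testing against local functions supported on one side while summing out the cut spin would force $\mu_0=\mu_0'$ and $\mu_1=\mu_1'$. The step I expect to be the genuine obstacle is the extremality transfer in the ``only if'' direction—establishing the exact correspondence between extreme points of $\gee(\gamma^{\beta\Psi})$ on $\mathcal{T}^2$ and extreme points on the rooted subtrees. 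This requires care because extremality is a global tail-triviality statement, and I would need to argue that the tail $\sigma$-algebra of $\mu$ decomposes compatibly with the edge deletion; the cleanest route is probably to invoke the one-to-one structure of the product construction to show that the extreme-point correspondence is a bijection, so that indecomposability is preserved in both directions.
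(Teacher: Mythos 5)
First, a point of comparison: the paper itself gives no proof of this statement --- Theorem~3.1 is quoted directly from Bleher--Ganikhodjaev \cite{BLG} --- so your proposal has to be judged on its own merits, and it contains a genuine gap in the ``only if'' direction, which is the heart of the theorem.

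The gap is twofold. (a) Your construction of $\mu_0$ is not the right object: the conditional law of $\sigma_{\mathcal{T}_0^2}$ given the $\mathcal{T}_1^2$-side is \emph{not} $\mu_0$ but a tilted version of it --- reading the splitting formula backwards, conditioning on $\omega_{i_1}$ produces the extra root factor $e^{\beta J \sigma_{i_0}\omega_{i_1}}$, so the conditional (and likewise the bare marginal of $\mu$) is Gibbs for a rooted-tree specification with an added field at $i_0$, not for the free rooted-tree Ising specification. The correct construction removes the bond: consider $d\tilde\mu \propto e^{-\beta J \sigma_{i_0}\sigma_{i_1}}\,d\mu$, which is Gibbs for the \emph{decoupled} specification on $\mathcal{T}_0^2 \sqcup \mathcal{T}_1^2$, and take $\mu_0,\mu_1$ to be its two marginals. (b) More seriously, the splitting formula is then exactly the assertion that $\tilde\mu$ is a \emph{product} measure, and this is precisely where extremality of $\mu$ must be used; it cannot follow from your claim that ``the DLR equations for $\mu$ localize onto each subtree once the coupling bond is held fixed,'' because DLR localization holds for \emph{every} Gibbs measure, while the splitting does not. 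A concrete counterexample: for $\beta>\beta_c$ take $\mu=\frac{1}{2}(\mu^{+}+\mu^{-})$; after removing the single bond the two subtrees remain strongly correlated (both predominantly plus or both predominantly minus), $\tilde\mu$ is a mixture of two distinct products, and no pair $(\mu_0,\mu_1)$ satisfies the displayed formula. The missing argument is the standard tail-triviality circle of ideas: extremality of $\mu$ is equivalent to triviality of $\mu$ on the tail $\sigma$-algebra; $\tilde\mu$ is mutually absolutely continuous with $\mu$ with a local, bounded, strictly positive density, hence also tail trivial; and a tail-trivial Gibbs measure for a decoupled specification factorizes as a product of extreme Gibbs measures of the factors (Georgii \cite{HOG}, Ch.~7). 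The same ingredient is what your ``if'' direction lacks: there you verify only that the glued measure is Gibbs, never that it is extreme, and the lifting argument you sketch runs in the wrong direction for that half (a decomposition of $\mu$ into Gibbs measures need not a priori consist of measures that split). What is needed there is that a product of two tail-trivial measures is tail trivial, plus the equivalence of tail classes under the local tilt --- a standard but nontrivial fact that your appeal to ``the one-to-one structure of the product construction'' does not supply.
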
}


It is thus equivalent to our purposes to work on {\em rooted} Cayley trees instead of the full Cayley tree. Now, let $\mu$ be any Gibbs measure for the Ising 
model on the rooted Cayley tree
$\mathcal{T}^{2}_{0}$. 
We choose the root as the origin and we
denote it $r$. Define
\begin{displaymath}
\Omega=\{-1,+1\}^{\mathcal{T}^{2}_{0}} \; \textrm{and} \;
\Omega'=\{-1,0,+1\}^{\mathcal{T}^{2}_{0}}.
\end{displaymath}
Let $R$ be any non-negative integer. We define the \emph{closed ball} of
radius $R$,
$
V_R=\{i \in \mathcal{T}^{2}_{0} \mid d(r,i) \leq R \}
$
and the \emph{sphere} of radius $R$ (or sometimes level or generation $R$),
$
W_R=\{i \in \mathcal{T}^{2}_{0} \mid d(r,i) =R \},
$
where $d$ is the canonical metric on $\mathcal{T}^{2}_{0}$.
Vertices of $\mathcal{T}^{2}_{0}$ {are represented} by sequences
of bits by means of the following recurrence:

\begin{itemize}
\item
Origin $r$ is represented by the void sequence and its neighbors by
$0$ and $1$.
\item
Let $R>0$ and let $i \in W_R$ be represented by $i^{*}$. The representations of the neighbors of $i$ in $W_{R+1}$, called 
 $k$ and $l$, are
$
k^*=i^*0 \; \textrm{and} \; l^*=i^*1.
$
\end{itemize}
In this way we obtain a representation of all the vertices of $\mathcal{T}^{2}_{0}$
(see Figure~\ref{fig.binreprcayley}). {To avoid notational overloading we shall use the same symbol $i$ for {both} the vertex
or its binary representation $i^*$}. Define the root cell $C_r=\{r,0,1\}$ and {let}, $\forall j \in \mathcal{T}^{2}_{0},\; j \neq
r$ a general cell $C_j=\{j,j0,j1\}$, where $j0$ and $j1$ are the two neighbors ({or children}) of $j$ from the ``following''
level (for example, $C_0=\{0,00,01\}$, see Figure~\ref{fig.TactionT02}).
Define as well $c_j=|C_j|$.
\begin{figure}[!hbtp]
\centering
\includegraphics[width=\textwidth]{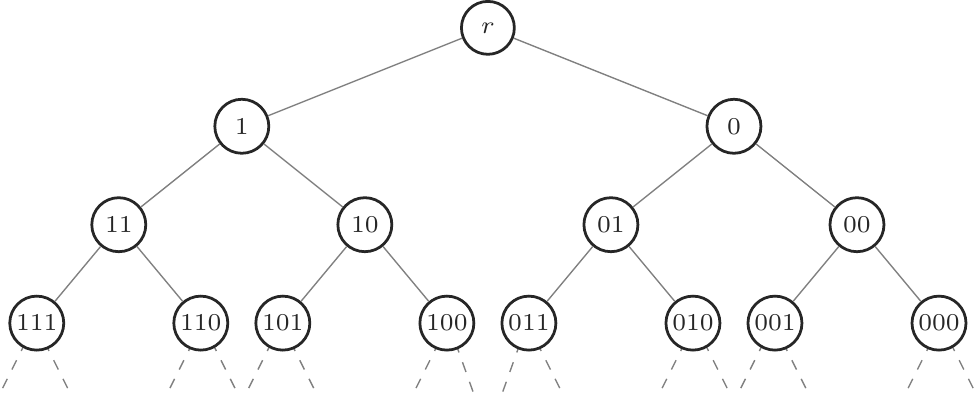} 
\caption{Binary representation of the vertices of $\mathcal{T}_{0}^{2}$, the rooted Cayley
tree of degree $2$. {Here vertices up to level/generation $R=3$ are represented}.}
\label{fig.binreprcayley}
\end{figure}

{In this paper we shall consider} $c_j=c=3$.

\smallskip

 Let us consider now the following, deterministic \emph{Majority rule transformation}:
\begin{eqnarray*}
T &\colon& \Omega \longrightarrow \Omega'\\
& & \omega \longmapsto \omega',
\end{eqnarray*}
where the {image spin $\omega'$ at site $j$} is defined by
\begin{displaymath}
\label{def.mmj}
\omega'_j=\left\{
\begin{array}{lllll}\; +1 \; & \textrm{iff}& \frac{1}{c}\sum_{i \in C_j}
\omega_i=+1\\
\\
\; 0 \; & \textrm{iff}&\frac{1}{c} \mid \sum_{i \in C_j}
\omega_i \mid < 1\\
\\
\; -1 \;& \textrm{iff}& \frac{1}{c}\sum_{i \in C_j}
\omega_i=-1.
\end{array} \right.
\end{displaymath}

\begin{figure}[!htbp]
\centering
\includegraphics[width=\textwidth]{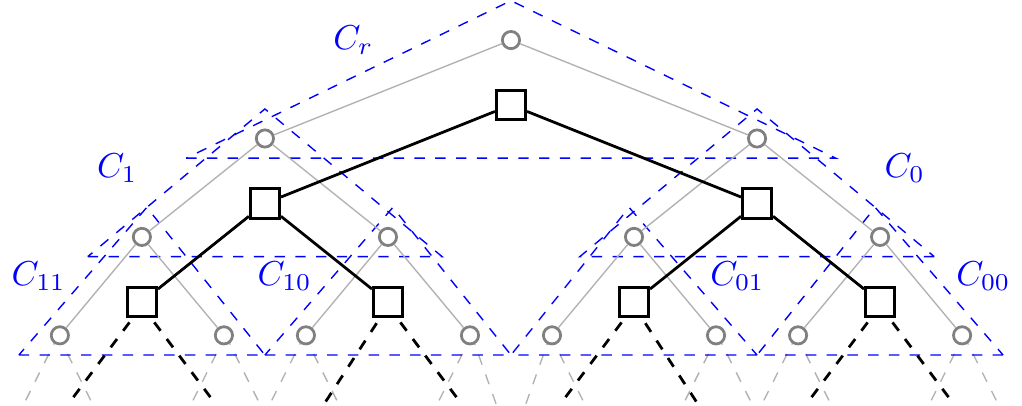} 
\caption{Pictorial representation of the transformation $T$ acting on $\mathcal{T}_{0}^{2}$. Starting vertices are represented by small circles and image vertices by squares.}
\label{fig.TactionT02}
\end{figure}

This transformation extends naturally on measures. We define  $\nu \coloneqq T \mu$ to be the {\em renormalized measure} defined as the image measure of our Gibbs measure $\mu$, so that
$$
\forall A' \in \mathcal{F}',\; \nu(A') = \mu(T^{-1} A')
$$
where $T^{-1} A' \in \mathcal{F}$ is the pre-image of the measurable set $A'$,
$$
T^{-1} A' = \big\{ \omega \in \Omega, T(w) \in A' \big\}
$$
\subsection{Previous result: Non-Gibbsianness at any temperature}

In \cite{ALN1}, one of us proved that the image measure of the modified majority rule $T$ on a rooted Cayley tree $\mathcal{T}_{0}^{2}$ of degree
$2$ is  not Gibbsian:
\begin{theorem}\cite{ALN1}
Let $\mu$ be any Gibbs measure for the Ising model on
$\mathcal{T}^{2}_{0}$. Then the renormalized {measure} $\nu=T\mu$ is non quasilocal at any $\beta >0$ and cannot be a Gibbs
measure.
\end{theorem}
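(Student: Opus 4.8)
\medskip
\noindent\textbf{Proof proposal.} The plan is to disprove quasilocality of $\nu=T\mu$ directly, and then invoke the Kozlov--Sullivan implications recalled above: since a non-null Gibbs measure with finite state space admits no point of essential discontinuity, it suffices to exhibit one such point for $\nu$ in the sense of Definition~\ref{def.ped}. I would take $V=\{r\}$, the local observable $f=\sigma'_r$ (the renormalized spin at the root, valued in $\{-1,0,+1\}$), and the all-neutral configuration $\omega'^{*}\equiv 0$ as candidate bad point, and then produce two boundary conditions that agree with $\omega'^{*}$ on balls of growing radius while keeping the conditional magnetization $\magnprime{r}{\,\cdot\,}$ separated by a fixed $\delta>0$, uniformly in $\beta$.

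The mechanism I would exploit is that the image symbol $0$ encodes a \emph{hard} constraint --- that the corresponding cell be non-monochromatic --- and that this constraint relays a monochromatic deep boundary inward \emph{deterministically}, at every temperature. Fix $R$ and condition on the image being equal to $0$ on $V_{R-1}\setminus\{r\}$ and equal to $+1$ on the (finite) sphere $W_R$: the monochromatic cells anchored on $W_R$ freeze every hidden spin of $W_R$ to $+1$, and the non-monochromatic constraints carried by the inner $0$'s then pin the hidden spins level by level, each interior vertex having two already-frozen, equal children and therefore being forced to the opposite value. An induction on the level shows that the hidden spin at level $d$ equals $(-1)^{R-d}$, independently of the Boltzmann weights; in particular the two children $\sigma_0,\sigma_1$ of the root are pinned to the common value $(-1)^{R-1}$. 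The only spin left free is $\sigma_r$, which couples to $\sigma_0,\sigma_1$ thermally alone, so that $\magnprime{r}{\xi}=(-1)^{R-1}\,e^{2\beta J}/(e^{2\beta J}+e^{-2\beta J})$. Replacing the $+1$ boundary on $W_R$ by a $-1$ boundary (call this configuration $\zeta$) flips every forced spin, whence $\magnprime{r}{\zeta}=-\magnprime{r}{\xi}$ and
\be
\big|\magnprime{r}{\xi}-\magnprime{r}{\zeta}\big|=\frac{2}{1+e^{-4\beta J}}\ \geq\ 1
\ee
for every $\beta>0$ and every $R$, while $\xi$ and $\zeta$ agree with $\omega'^{*}$ on $V_{R-1}\setminus\{r\}$.

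The step to make fully rigorous --- and the reason the pathology survives at \emph{any} temperature rather than only below a transition --- is exactly this attenuation-free propagation: the hard non-monochromatic constraints act as infinitely strong couplings on the renormalized tree, so the thermal fluctuations of the ferromagnet can never wash out the parity signal carried by the frozen boundary. I would isolate the deterministic forcing as a short lemma proved by induction on the level, using the nearest-neighbour (Markov) property of $\mu$ on the tree --- or, equivalently, the splitting property of Theorem~\ref{thm.split} applied to the extreme components of $\mu$ --- to justify that the conditioned law screens off everything beyond $W_R$, and I would invoke the non-nullness of $\mu$ to ensure that the two conditioning events have positive measure, so that $\magnprime{r}{\xi}$ and $\magnprime{r}{\zeta}$ are bona fide conditional expectations. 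Letting $R\to\infty$ then exhibits $\omega'^{*}\equiv 0$ as a point of essential discontinuity with $\delta=1$, so $\nu$ is not quasilocal and, by the Kozlov--Sullivan characterization, cannot be Gibbsian for any UAC potential at any $\beta>0$.
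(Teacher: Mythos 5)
Your proposal is correct and follows essentially the same route as the paper: it exhibits the null configuration $\omega'^0$ as a point of essential discontinuity (the content of the paper's Lemma, whose computational details are deferred to \cite{ALN1}) and then concludes non-Gibbsianness via the Kozlov--Sullivan criterion. The deterministic alternating-forcing computation you give --- monochromatic image spins on $W_R$ freezing the hidden spins, with the inner zeros propagating the frozen values inward with alternating sign, yielding a conditional-magnetization gap $2/(1+e^{-4\beta J})\geq 1$ uniformly in $R$ and $\beta$ --- is precisely the temperature-independent, hard-constraint mechanism underlying that lemma.
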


The result of \cite{ALN1} is due to the following
\begin{lemma}
The ``null-configuration'' $\omega'^0$, defined by $\omega'^0_i= 0$  at any site $i \in \mathcal{T}_0^2$, is a bad configuration, {\em i.e.} a point of essential
discontinuity for the conditional probabilities of the image measure $\nu$ under the  majority rule $T$.
\end{lemma}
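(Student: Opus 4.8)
The plan is to show that the null-configuration $\omega'^0$ is a point of essential discontinuity in the sense of Definition~\ref{def.ped} by fixing the local function $f=\sigma'_r$ (the renormalized spin at the root) and exhibiting two neighborhoods of $\omega'^0$ on which the conditional magnetization $\nu[\sigma'_r \mid \mathcal{F}_{\{r\}^c}]$ differs by more than some fixed $\delta>0$, no matter how large a finite inner volume $V$ we fix. The key geometric observation is that the renormalized spin $\omega'_j=0$ at site $j$ means $\sum_{i \in C_j}\omega_i \in \{-1,+1\}$, i.e. the original spins on the cell $C_j$ are \emph{not} all equal: the cell is ``frustrated'' and carries exactly one spin disagreeing with the other two. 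Since cells overlap (the root cell $C_r=\{r,0,1\}$ shares the sites $0,1$ with the cells $C_0,C_1$ immediately below), conditioning on $\omega'_j=0$ along a whole branch does not pin down the underlying Ising spins but instead propagates a constraint, so information about the deep boundary can travel all the way to the root. This is precisely the mechanism absent under decimation, where cells are disjoint.

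The first step is to make this propagation quantitative by conditioning on the image spins outside the root. Fixing $\omega'_i=0$ for all $i\neq r$ constrains each cell $C_i$ to the non-constant configurations; because the cells tile the tree with overlaps, the admissible original configurations are in bijection with the choices, at each vertex, of which neighbor in the cell disagrees. Next I would compute the conditional law of the original spins on the root cell $C_r$ given these constraints and given a deep boundary condition at generation $R$, reducing everything to a transfer-matrix / Markov-chain recursion down the tree. The point is that the constraint $\omega'_i=0$ does not decouple the root from the boundary: the recursion retains a non-degenerate dependence on the boundary spins, so the conditional expectation $\mu[\sigma_r \mid \cdot]$ — and hence $\nu[\sigma'_r\mid \cdot]$ — remains sensitive to the far-away configuration even as $R\to\infty$.

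Concretely, I would then choose the two perturbing configurations to be all-$+$ and all-$-$ image spins beyond some radius, i.e. neighborhoods $\mathcal{N}^1_V(\omega'^0)$ and $\mathcal{N}^2_V(\omega'^0)$ on which the spins outside $V$ are all $0$ except on a far sphere where they are forced to be respectively $+1$ and $-1$. Using the splitting property of Theorem~\ref{thm.split} and the existence of distinct extremal phases $\mu^+,\mu^-$ from the Proposition, these two boundary choices bias the root-cell magnetization in opposite directions by an amount bounded below uniformly in $V$, yielding the gap $\delta>0$.

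The main obstacle I anticipate is controlling the recursion uniformly: one must verify that the constrained transfer matrix obtained from the null-configuration has a strictly positive ``susceptibility'' that does not decay to zero as the volume grows, so that the boundary influence survives to the root at \emph{every} temperature $\beta>0$ rather than only below $\beta_c$. The delicate point is that the frustration constraint $\omega'_i=0$ could \emph{a priori} average out the boundary dependence; showing it does not — that the two-to-one cell overlap genuinely transmits a persistent signal — is the crux, and I expect it to reduce to an explicit, $\beta$-independent lower bound on a ratio of partition functions on the constrained tree.
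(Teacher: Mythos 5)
The paper itself does not reprove this lemma (it is imported from \cite{ALN1}), but your skeleton --- the local function $f=\sigma'_r$, an annulus of image zeros around the root, and far-away all-$+$ versus all-$-$ perturbations --- is exactly the right one. The genuine gap sits in the step you yourself flag as the crux: you leave the survival of the boundary signal as an unproven analytic estimate (a ``strictly positive susceptibility'', a lower bound on a ratio of constrained partition functions), and the one concrete tool you invoke to produce the gap $\delta$, namely the splitting property of Theorem~\ref{thm.split} together with the existence of distinct extremal phases $\mu^+\neq\mu^-$, is unavailable precisely where it is needed: for $\beta\le\beta_{\rm c}$ the plus and minus phases coincide and boundary influence in the (unconstrained) Ising model on the tree decays, so this route can at best give non-Gibbsianness at low temperature, whereas the lemma asserts essential discontinuity at \emph{every} $\beta>0$.

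The missing idea is that no estimate is needed, because the conditioning is rigid. Fix $R'>R$ and condition on image spins equal to $0$ on levels $1,\dots,R'-1$ and equal to $+1$ on the whole sphere $W_{R'}$ (arbitrary beyond). The $+1$ ring forces all original spins on levels $R'$ and $R'+1$ to be $+1$; then the constraint that each cell $C_j=\{j,j0,j1\}$ be non-constant propagates \emph{deterministically} downward: every original spin at level $\ell$ equals $(-1)^{R'-\ell}$, for $1\le\ell\le R'$, whatever the configuration beyond level $R'$. In particular $\sigma_0=\sigma_1=(-1)^{R'-1}$ are frozen, the only free spin is $\sigma_r$, whose conditional law is proportional to $e^{2\beta(-1)^{R'-1}\sigma_r}$, and hence
\[
\nu\big[\sigma'_r \mid \cdot\,\big]=(-1)^{R'-1}\,\frac{e^{2\beta}}{e^{2\beta}+e^{-2\beta}},
\]
while the all-$-$ ring at level $R'$ gives the opposite sign. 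The two conditional magnetizations thus differ by $2e^{2\beta}/(e^{2\beta}+e^{-2\beta})\ge 1$ uniformly in $R$, $R'$, in the tail configuration, and at every $\beta>0$: the discontinuity is a combinatorial-rigidity effect of the overlapping cells, not a phase-transition effect, which is why your anticipated ``$\beta$-independent lower bound on a ratio of partition functions'' dissolves into a one-site computation. Note also a parity subtlety your picture misses: under the all-$+$ ring the sign of the root magnetization alternates with $R'$, so the far $+$ boundary does not simply ``bias the root toward $+$''; only the difference between the $+$ and $-$ rings at the same distance is uniformly large, and that is what the definition of essential discontinuity requires.
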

 
 \smallskip

Thus, we know that the ``null configuration'' is a bad configuration and that the image measure is not quasilocal. To incorporate the renormalized measure within the Dobrushin program, our purpose is to evaluate the measure of the set of such ``bad configurations''.

\section{Main result: almost Gibbsianness at any temperature}
{The main contribution of this paper is the following

\begin{theorem}\label{thm3}
The renormalized measures $\nu = T \mu$, obtained by the modified majority rule from any Gibbs measure {$\mu$} for the Ising model on $\mathcal{T}_0^2$,  are {\em almost Gibbsian at any temperature}.
\end{theorem}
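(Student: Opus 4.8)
The plan is to pin down, in purely geometric terms, which renormalized configurations are points of essential discontinuity, and then to show that these offending configurations form a $\nu$-null set. The starting observation is that the image spin $\omega'_j$ is nonzero exactly when the cell $C_j=\{j,j0,j1\}$ is monochromatic, in which case the three underlying spins are frozen to a common value, whereas $\omega'_j=0$ forces $C_j$ to be \emph{mixed}. Since $\mu$ is a tree-indexed Markov chain (Theorem~\ref{thm.split} and the splitting property), conditioning on the image pins an original spin $\sigma_v$ whenever $v$ lies in a monochromatic cell, and a pinned spin Markov-decouples the subtree below $v$ from the rest of $\rbt$. The crux is to understand how a boundary condition imposed far away can still be felt near the root after conditioning on $\omega'$. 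A \emph{soft} (field-like) influence is transmitted through a chain of mixed cells by the single-site transfer map $x\mapsto g(x)=\frac{2x+e^{-2\beta}}{2x+e^{-2\beta}x^{2}}$ (the ratio of the two spin weights), whose symmetric fixed point $x=1$ satisfies $|g'(1)|=\frac{2e^{-2\beta}}{2+e^{-2\beta}}<1$ at \emph{every} temperature; thus a soft signal is strictly contracted at each step. A \emph{rigid} (forcing) influence lives only on the degenerate orbit $\{0,\infty\}$ of $g$, i.e.\ when the underlying spins are deterministically forced; and a short computation shows that the spin at $v$ is forced by a frozen boundary at depth $R$ below $v$ only if \emph{every} cell of the full binary subtree rooted at $v$ down to depth $R$ is mixed. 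Letting $R\to\infty$, persistent forcing at $v$, hence essential discontinuity, requires the whole subtree below $v$ to be all-zero.

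This suggests introducing $D(\omega')=\{v:\ \omega'_w=0 \text{ for every descendant } w \text{ of } v\}$ and proving the inclusion $\{\omega' \text{ bad}\}\subseteq\{D(\omega')\neq\emptyset\}$, equivalently that every $\omega'$ with no infinite all-zero subtree is a point of essential continuity. For such $\omega'$ one establishes continuity of $\nu[f\mid\fe'_{V^{c}}]$ by propagating the boundary influence inward through the transfer maps: frozen cells pin and block it, while along all-zero channels it is contracted by $g$. The delicate part, which I expect to be the main obstacle, is that the number of all-zero paths of length $R$ reaching the root can grow like $(3/2)^{R}$, so the naive first-moment bound on the transmitted influence is only critical ($\tfrac32\cdot|g'(1)|\approx1$). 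The absence of a complete all-zero subtree rules out the rigid $O(1)$ contribution, and the residual soft contribution must be controlled by a second-moment estimate, which is \emph{strictly} subcritical because $2\,(g'(1))^{2}=2\big(\tfrac{2e^{-2\beta}}{2+e^{-2\beta}}\big)^{2}\le\tfrac{8}{9}<1$ at all temperatures (a Kesten--Stigum-type condition). This is where transfer matrices and moment estimates enter.

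Finally I would bound $\nu(D\neq\emptyset)$. By countable subadditivity it suffices to show, for each vertex $u$, that $\nu(u\in D)=\pee_{\mu}(\text{every cell in the subtree below } u \text{ is mixed})=0$. Writing $r_R^{s}$ for the $\mu$-probability that all cells in the depth-$R$ subtree below $u$ are mixed given $\sigma_u=s$, the Markov branching structure (children conditionally independent given the parent) yields the recursion
\[
r_{R+1}^{s}=\sum_{(a,b):\,(s,a,b)\text{ mixed}} P(a\mid s)\,P(b\mid s)\,r_R^{a}\,r_R^{b},
\]
with $P(\cdot\mid\cdot)$ the ferromagnetic parent--child transition. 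Since $P(\text{both children equal to } s\mid s)>0$ is bounded below uniformly, one has $\rho:=\max_{s}\sum_{(a,b)\text{ mixed}}P(a\mid s)P(b\mid s)=\max_{s}\pee_{\mu}(\text{cell mixed}\mid s)<1$, whence $\max_{s} r_R^{s}\le\rho^{\,R}\to0$. Therefore $\nu(u\in D)=\lim_{R} r_R=0$ for every $u$, so $\nu(D\neq\emptyset)=0$ and \emph{a fortiori} $\nu(\{\omega'\text{ bad}\})=0$. Together with the first two steps this provides a specification $\gamma$, namely the regular conditional probabilities of $\nu$, which are continuous off the null set $\{D\neq\emptyset\}$, with $\nu\in\mathcal{G}(\gamma)$ and $\nu(\Om_{\gamma})=1$; that is, $\nu$ is almost Gibbsian at every temperature.
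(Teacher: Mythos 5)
Your overall strategy (exhibit a geometrically defined superset of the bad configurations, then prove it is $\nu$-null) parallels the paper's, and your final step is sound: the probability that every cell of an infinite subtree is mixed does decay geometrically, so $\nu(D\neq\emptyset)=0$. But the inclusion on which everything rests, $\{\omega'\ \text{bad}\}\subseteq\{D(\omega')\neq\emptyset\}$, is false, and this is a genuine gap rather than a presentational one. Rigid forcing does \emph{not} require a full binary subtree of mixed cells: it can propagate along a \emph{single} infinite path of zeros when the image spins at the off-path neighbors alternate in sign. Concretely, take $\eta'$ with exactly one infinite path of zeros and off-path spins $h_n=(-1)^n$ — the paper's alternating configurations $\eta'^1_{\rm alt},\eta'^2_{\rm alt}$ of (\ref{alter}). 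The constraint that each cell on the path stays mixed makes the two-step transfer matrix lower triangular,
\[
Q_{-+}Q_{+-}=\begin{pmatrix}0&e^{-2\beta}\\ 1&e^{\beta}\end{pmatrix}\begin{pmatrix}e^{\beta}&1\\ e^{-2\beta}&0\end{pmatrix}=\begin{pmatrix}e^{-4\beta}&0\\ e^{\beta}+e^{-\beta}&1\end{pmatrix},
\]
and products of lower-triangular matrices remain lower triangular, so a zero entry survives to every depth $R$ (this is the paper's case $n_0(\eta')=\infty$). Consequently one far boundary value forces the spin near the root deterministically for all $R$, while the other boundary value yields a magnetization bounded away from that forced value: with $P_{2k}=(Q_{-+}Q_{+-})^k$ one gets $a_R/(a_R+b_R)=1$ identically, whereas $c_R/(c_R+d_R)$ converges to the strictly positive constant $\frac{v/(1-u)}{v/(1-u)+1}$ with $u=e^{-4\beta}$, $v=e^{\beta}+e^{-\beta}$. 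So these $\eta'$ are points of essential discontinuity, yet they contain no all-zero subtree, i.e.\ $D(\eta')=\emptyset$. Your ``short computation'' asserting that persistent forcing requires all cells of the subtree below $v$ to be mixed therefore cannot be correct, and your soft-contraction analysis (linearization $|g'(1)|<1$ at the symmetric fixed point) simply does not apply to these configurations, because the relevant ratios are degenerate ($0$ or $\infty$) at every step.

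Because of this, your null set is too small, and continuity off it cannot be established by the argument you sketch. The paper's proof shows what is actually needed: (i) a uniform (not linearized) contraction estimate along a path of zeros, with Lipschitz constants $e^{-\beta},e^{-2\beta}$ for the ratio maps, valid only when the matrix products eventually have no vanishing entry, which yields the quantitative bound $C\,N_R(\eta')\,(e^{-\beta})^R$ of Lemma \ref{lem.suffcondec}; (ii) an almost-sure growth bound $N_R(\eta')=o(e^{\beta R})$, obtained via an exponential Chebyshev/moment-generating-function estimate (Lemmas \ref{Omegag} and \ref{thm.expdb}); and (iii) a separate percolation (``zebra'') argument, Lemma \ref{4.7}, showing that the alternating configurations — exactly the ones your characterization misclassifies as good — form a $\nu$-null set. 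Two further soft spots in your sketch: the number of zero-paths to level $R$ on the binary tree can be as large as $2^R$, not $(3/2)^R$, so your ``critical first moment'' bookkeeping is off; and the Kesten--Stigum-type claim $2(g'(1))^2\le 8/9$ is a linearized heuristic that is never converted into an actual bound on conditional expectations. But the decisive defect is the false inclusion above.
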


\begin{proof}
In the non-Gibbsian result of the previous section, the essential discontinuity of the bad  null  configuration $\omega'^0$ defined above is due to the massive dissemination of information possible through a neutral (filled of zero) cell, for which no specific discrimination between plus or minus is made. In turns, it appears that the crucial point to get essential discontinuity {\em via} transmission of far away information is that the configuration possesses enough {\em percolation of zeros}.

\smallskip

We proceed as follows: first, we recall some necessary percolation notions (paths and percolations of zeros) and describe the essential continuity of the magnetization in absence of such {a} percolation ({Lemma \ref{thm.esscontmagnbc}}). {Second,} we extend this continuity result to the case of a single infinite path of zeros ({Lemma \ref{thm.contmagn}}), and use its proof to get a sufficient condition for discontinuity coined in terms of the number of infinite such paths, {which should increase sufficiently fast with the volume} ({Lemma \ref{lem.suffcondec}}). {Finally}, we use percolation techniques (Lemma \ref{thm.expdb}) to upper bound the measure of bad configurations by zero (Lemma \ref{Omegac}), so that almost Gibbsianness follows. 

\end{proof}

\begin{defi}{[Path of zeros]}
{
{Let $\eta'$ be {any} image configuration.}
For an integer $R$, a {\em path of zeros from the origin to level $R$} is the sequence of sites $\pi=(i_k)_{k=0, \ldots , R}$ s.t.\ $i_0=r,\; \textrm{and} \; \forall k=1 \dots R, \; i_k \in W_k,\; i_k,i_{k-1} \; \textrm{are} \;  n.n. 
$ 
and $\forall k \leq R, \; \eta'(i_k)=0$.
}
\end{defi}

We denote by $N_R(\eta')$ the number of paths of zeros connecting the origin to level $R$ in the configuration $\eta'$, and the number of infinite paths by $N(\eta')=\lim_{R \to \infty} N_{R}(\eta')$. {When} $N(\eta') \neq 0$, we say that there is {\em percolation} or {{\em clusters}} of zeros. 

\smallskip

{We first remark that in absence of percolation of zeros} in a configuration $\eta'$, {conditional magnetizations are (essentially) continuous, in the sense that there always exists a version of the conditional expectation of the spin at the origin which is continuous}.

\subsection{Continuity of the magnetization in absence of percolation}\label{sss.cn0}

Consider first the case of absence of percolation of zeros, that is, {let us take} a configuration $\eta'$ such that $N(\eta')=0$. Then there exists an integer
$R_0 \geq 0$ such that for all $R \geq R_0=R_0(\eta'), \, N_R(\eta')=0$.
Let us prove that any such {}$\eta'$ is a good
configuration for the image measure $\nu=T\mu$. In order to do so, consider the basis of
neighb{o}rhoods $\mathcal{N}^{R}(\eta')$ for {a} configuration $\eta'$, 

\begin{displaymath}
\mathcal{N}^{R}(\eta')=\{\omega' \in \Omega', \;
\omega'_{V_{R}}=\eta'_{V_{R}},\;\omega' \;
\textrm{arbitrary elsewhere}\}.
\end{displaymath}
We shall study the continuity, as a function of the boundary
condition, of the value of the image spin at the origin $r$ in the neighb{o}rhood
of $\eta$, namely the {conditional} magnetization
\be \label{condmagn1}
\langle \sigma'_r \rangle^{\eta',R}=\nu[\sigma'_r \mid
\sigma'_{\{r\}^c}=\omega'_{\{r\}^c}, \; \omega' \in
\mathcal{N}^{R}(\eta')].
\ee
\begin{lemma}{\cite{ALN1}}\label{lem.levelindep} Let $\eta'$ be an image configuration s.t.\ $N(\eta')=0$. Then, for all $R \geq R_0(\eta')$, the conditional magnetization {(\ref{condmagn1})} 
reduces to 
$$
\langle \sigma'_r \rangle^{\eta',R}=\nu[\sigma'_r
\mid \sigma'_{V_{R_0 \setminus \{r\}}}=\eta'_{V_{R_0 \setminus \{r\}}}]
$$
and it is independent of $R \geq R_0$.
\end{lemma}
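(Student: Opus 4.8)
The plan is to reduce everything to the microscopic Ising spins and then exploit the spatial Markov property of $\mu$ on the tree, using the key mechanism behind non-Gibbsianness: a \emph{non-zero} image spin pins down an entire microscopic cell and therefore screens the root from all spins lying beyond it. First I would record that $\sigma'_r$ is a deterministic function of the three micro-spins $(\omega_r,\omega_0,\omega_1)$ in the root cell $C_r=\{r,0,1\}$, so that computing $\langle\sigma'_r\rangle^{\eta',R}$ amounts to determining the $\mu$-conditional law of $(\omega_r,\omega_0,\omega_1)$ given that the image configuration on $V_R\setminus\{r\}$ equals $\eta'$ (with arbitrary tail outside $V_R$). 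The crucial observation is the screening property built into $T$: wherever $\eta'(j)\neq 0$, the definition of the majority rule forces every micro-spin in $C_j$ to equal $\eta'(j)$; in particular $\omega_j$ is \emph{determined}. Since $\mu$ is a nearest-neighbor Ising Gibbs measure, it is a Markov field on $\mathcal{T}_0^2$ (equivalently, it enjoys the splitting property of Theorem~\ref{BlG}): conditioning on the micro-spins along a cutset makes the configurations on the two sides conditionally independent.

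Next I would build the relevant cutset. Define $\partial(\eta')$ to be the set of sites $j$ carrying the \emph{first} non-zero image spin encountered along the branch from $r$ to $j$. Because $N(\eta')=0$ there is no infinite path of zeros, and since $\mathcal{T}_0^2$ is binary (hence finitely branching), K\"onig's lemma shows that the collection of paths of zeros issued from the origin is a finite subtree. Concretely, $N_{R_0}(\eta')=0$ means that no path of zeros reaches level $R_0$, so the first non-zero site on each branch sits at depth at most $R_0$, i.e.\ $\partial(\eta')\subset V_{R_0}$. Thus $\partial(\eta')$ is a genuine cutset separating a finite enclosed region $\Lambda\subset V_{R_0}$, which contains $C_r$, from the rest of the tree.

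Finally I would conclude by Markov screening. Conditioning on $\{\sigma'_m=\eta'(m):m\in\partial(\eta')\}$ fixes the micro-spins $\omega_j=\eta'(j)$ for every $j\in\partial(\eta')$. By the Markov property, conditionally on these fixed boundary micro-spins the micro-configuration inside $\Lambda$ --- and in particular the triple $(\omega_r,\omega_0,\omega_1)$ --- is independent of the micro-configuration outside $\Lambda$. Since every image spin at a site beyond $\partial(\eta')$ is a function of the outside micro-configuration, it carries no information about $(\omega_r,\omega_0,\omega_1)$ once the cutset micro-spins are fixed. Hence the $\mu$-conditional law of $(\omega_r,\omega_0,\omega_1)$, and therefore $\langle\sigma'_r\rangle^{\eta',R}$, depends only on the restriction of $\eta'$ to $V_{R_0}\setminus\{r\}$ and not on the tail $\omega'_{V_R^{c}}$. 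This yields simultaneously the claimed reduction to $\nu[\sigma'_r\mid\sigma'_{V_{R_0}\setminus\{r\}}=\eta'_{V_{R_0}\setminus\{r\}}]$ and the independence of $R\geq R_0$.

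The main obstacle I anticipate is the discrepancy between conditioning on \emph{image} spins and conditioning on \emph{micro}-spins: a zero image spin does not fix any micro-spin (it only records that the cell is not monochromatic), so screening is available \emph{only} at non-zero image sites. The whole argument therefore hinges on showing that the non-zero image sites form a genuine finite cutset, which is exactly where the hypothesis $N(\eta')=0$ together with K\"onig's lemma enters. A secondary point requiring care is that conditioning on the \emph{inside} image spins (themselves not micro-spins) must not reintroduce dependence on the far boundary; this is handled by first fixing the cutset micro-spins and only then invoking the conditional independence of the inside and outside microscopic configurations.
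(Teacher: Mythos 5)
Your argument is correct, but note that the paper does not actually prove Lemma \ref{lem.levelindep}: it defers the proof to the original reference \cite{ALN1}, describing it there as ``direct computations of elementary conditional expectations''. Your route is a structural repackaging of what makes those computations work, and it is complete: (i) the screening identity special to $c=3$, namely that $\eta'_j\neq 0$ forces all three micro-spins of $C_j$ to equal $\eta'_j$; (ii) the observation that $N_{R_0}(\eta')=0$ makes the first non-zero image sites along each branch a finite cutset $\partial(\eta')\subset V_{R_0}$ enclosing the root cell; and (iii) the nearest-neighbor (local) Markov property of $\mu$, which, once the cutset micro-spins are frozen by the image constraints, factorizes the pre-image event $A_R=T^{-1}(A'_R)$ into an inside part and an outside part. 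Concretely, writing $A_R=A_{\rm in}\cap A_{\partial}\cap A_{\rm out}$, one gets $\mu[f\mathbf{1}_{A_R}]=c_f\,c_{\rm out}\,\mu[A_\partial]$ and $\mu[\mathbf{1}_{A_R}]=c_1\,c_{\rm out}\,\mu[A_\partial]$, so the outside factor $c_{\rm out}$ (the only quantity depending on $R$ and on $\eta'$ beyond $V_{R_0}$) cancels in the ratio $\mu[f\mid A_R]=c_f/c_1$; this is exactly the cancellation your last paragraph gestures at, and it is rigorous because the enclosed region is finite, so only the local Markov property is needed. What your approach buys is generality and transparency: it works verbatim for any Gibbs measure of the n.n.\ potential (the local Markov property follows from the DLR equations alone), whereas the explicit computations of \cite{ALN1} are tied to the binary-tree bookkeeping. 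Two minor corrections: your parenthetical equating the Markov field property with the splitting property of Theorem \ref{BlG} is off, since splitting concerns \emph{extremal} measures, while what you use (and all you need) is the local Markov property valid for every $\mu\in\mathcal{G}(\gamma^{\beta\Psi})$; and your reading of the hypothesis as ``every branch hits a non-zero image site by depth $R_0$'' tacitly ignores the value $\eta'_r$ (the paper's path-of-zeros definition formally requires $\eta'_r=0$), which is indeed the sensible convention because the conditioning in (\ref{condmagn1}) never involves $\sigma'_r$.
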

{The proof of Lemma \ref{lem.levelindep} has been reported in \cite{ALN1}. It consists of direct computations of elementary conditional expectations in this discrete framework.}
{It follows that} the magnetization in a neighb{o}rhood of such an $\eta'$ is a
function of $\eta'_{V_{R_0 \setminus \{r\}}}$, and {thus} essentially  continuous:
\begin{lemma}{\cite{ALN1}}
\label{thm.esscontmagnbc}
The magnetization is essentially continuous as a function of the
boundary condition {$\eta'$}, for all  $\eta' \in T(\Omega)$ without percolation of zeros.
\end{lemma}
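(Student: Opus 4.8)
The plan is to leverage Lemma \ref{lem.levelindep} directly, which has already done the essential reduction. That lemma tells us that for a configuration $\eta'$ with $N(\eta')=0$, the conditional magnetization $\langle \sigma'_r \rangle^{\eta',R}$ stabilizes once $R \geq R_0(\eta')$ and equals a fixed quantity depending only on the finitely many coordinates $\eta'_{V_{R_0} \setminus \{r\}}$. So the work remaining is to translate this ``eventual level-independence'' into the formal statement of essential continuity in the sense of Definition \ref{def.ped}, namely exhibiting a continuous version of the conditional expectation of the local function $\sigma'_r$ near $\eta'$.

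First I would fix an arbitrary $\eta'$ with $N(\eta')=0$ and let $R_0=R_0(\eta')$ be the integer supplied by the no-percolation hypothesis, so that $N_R(\eta')=0$ for all $R \geq R_0$. The key observation is that the absence of percolation is an \emph{open} condition at the relevant finite scale: since all paths of zeros from the origin are cut off by level $R_0$, there is a ``blocking'' sphere of non-zero image spins separating the root from infinity. Concretely, I would argue that any configuration $\omega'$ agreeing with $\eta'$ on $V_{R_0}$ inherits the same blocking structure, hence has $N_{R}(\omega')$ paths that are already severed inside $V_{R_0}$; thus $\omega'$ also has $R_0(\omega') \leq R_0$ and, by Lemma \ref{lem.levelindep} applied to $\omega'$, its conditional magnetization equals the \emph{same} function of $\omega'_{V_{R_0} \setminus \{r\}} = \eta'_{V_{R_0} \setminus \{r\}}$. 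In other words, the magnetization is constant on the neighborhood $\mathcal{N}^{R_0}(\eta')$.

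From this local constancy the continuity is immediate: I would define the candidate continuous version of $\nu[\sigma'_r \mid \mathcal{F}_{\{r\}^c}]$ on the whole cylinder $\mathcal{N}^{R_0}(\eta')$ to be the common value given by Lemma \ref{lem.levelindep}. Since this version is constant on an open neighborhood of $\eta'$ in the product topology, it is trivially continuous at $\eta'$, which is exactly the negation of the essential-discontinuity condition of Definition \ref{def.ped}: no $\delta>0$ and no pair of sub-neighborhoods can produce oscillation, because the two conditional expectations coincide. Hence $\eta'$ is a point of essential continuity, i.e.\ a good configuration for $\nu=T\mu$.

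The main obstacle I anticipate is justifying that the blocking of \emph{all} infinite paths of zeros is genuinely a finite-volume (cylinder) event, so that the good behavior transfers from $\eta'$ to the entire neighborhood $\mathcal{N}^{R_0}(\eta')$ uniformly. One must be careful that new paths of zeros emanating from beyond $V_{R_0}$ cannot ``reconnect'' to the root through coordinates we are not controlling; this is precisely where the recursive tree structure helps, since any path to the origin must pass through $W_{R_0}$, and agreement with $\eta'$ on $V_{R_0}$ fixes those spins to be non-zero along every branch. Making this severing argument precise — and confirming that $R_0(\omega')$ is well-defined and bounded by $R_0$ for every $\omega' \in \mathcal{N}^{R_0}(\eta')$ — is the one step that requires genuine care rather than routine computation, after which essential continuity follows mechanically.
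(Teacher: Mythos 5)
Your proposal is correct and follows essentially the same route as the paper: the paper likewise deduces essential continuity directly from Lemma \ref{lem.levelindep}, observing that the conditional magnetization near $\eta'$ is a function of the finitely many coordinates $\eta'_{V_{R_0}\setminus\{r\}}$ alone, hence locally constant. The only difference is that you make explicit the step the paper leaves implicit — that $\{N_{R_0}=0\}$ is an $\mathcal{F}_{V_{R_0}}$-measurable (cylinder) event, since any path of zeros to level $R\geq R_0$ contains a path to level $R_0$ as a prefix, so every $\omega'\in\mathcal{N}^{R_0}(\eta')$ satisfies the hypothesis of Lemma \ref{lem.levelindep} with $R_0(\omega')\leq R_0$ — which is a worthwhile clarification but not a different argument.
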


Proceeding similarly for any local functions by integrating out w.r.t.~spins at the sites of the dependence set of $f$, one get this essential continuity for the expectation of any local function. 

\subsection{Continuity of the magnetization in case of unique infinite cluster}

{In this section we refine the analysis and estimate the effect that an infinite path of zeros in the image configuration could produce on the conditional expectation on the spin at the origin. Using a transfer matrix approach, we prove that this effect is asymptotically absent (essential continuity), except for a few peculiar configurations\footnote{{See (\ref{alter}), these are the configurations where the unique path of zeros is neighbored by {alternating} $+/-$'s.}}. We evaluate also finite-volume effects in order to get afterwards a sufficient condition on essential continuity in next section.}

\begin{lemma}[Continuity of {conditional} magnetization]\label{thm.contmagn}
Let $\eta' \in \Omega'$ such that $N(\eta')=1$, except alternating configurations of the type of $ \eta'^1_{\rm alt},\eta'^2_{\rm alt}$ as defined in (\ref{alter}). Then the conditional magnetizations (\ref{condmagn1}) {are} essentially continuous at $\eta'$.
Moreover, there exists a strictly positive constant $C>0$ such that
\be \label{EstR}
\forall R \geq 0, \forall \omega'_1, \omega'_2 \in \mathcal{N}^{R}(\eta'), \qquad \mid \langle \sigma'_r \rangle^{\omega'_1,R}-\langle \sigma'_r \rangle^{\omega'_2,R} \mid \leq C \cdot (e^{-\beta})^R.
\ee
\end{lemma}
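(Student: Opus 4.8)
The plan is to compute the conditional magnetization $\langle \sigma'_r \rangle^{\omega',R}$ by a single recursion up the tree and to show that, under the hypothesis $N(\eta')=1$, the dependence on the boundary condition outside $V_R$ is funneled through one single one-dimensional channel — the unique infinite path of zeros — along which it is damped exponentially by a transfer-matrix contraction. First I would set up the recursion. By the splitting (Markov) structure of Gibbs measures on the tree (Theorem~\ref{thm.split}), conditioning $\nu$ on an image configuration on $V_R\setminus\{r\}$ amounts to conditioning $\mu$ on the corresponding cell constraints: a site with image $+1$ (resp.\ $-1$) forces its whole cell $C_i$ to be monochromatic $+$ (resp.\ $-$), while a site with image $0$ only forbids $C_i$ from being monochromatic. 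For each vertex $i$ I introduce the pair of conditional partition functions $Z_i^{\pm}$ of the subtree rooted at $i$, given $\sigma_i=\pm$ and all cell constraints above $i$; these are computed recursively from the two children, and the joint law of the root cell $C_r=\{r,0,1\}$ — hence $\langle\sigma'_r\rangle$ — is an explicit function of $(Z_0^{\pm},Z_1^{\pm})$.

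Second, I would use $N(\eta')=1$ together with Lemma~\ref{thm.esscontmagnbc} to localize the boundary dependence. Off the unique infinite path of zeros $\pi=(i_k)_k$ there is no percolation of zeros, so every off-path subtree is eventually frozen to a constant sign and, by the level-independence of Lemma~\ref{lem.levelindep}, contributes an effective weight that does \emph{not} depend on the configuration outside $V_R$. Consequently the boundary can influence the root only through the ratios $r_k=Z_{i_k}^{+}/Z_{i_k}^{-}$ carried along $\pi$, and two boundary conditions $\omega'_1,\omega'_2\in\mathcal{N}^{R}(\eta')$ differ only through the initial vector $(Z_{i_R}^{+},Z_{i_R}^{-})$ fed in at level $R$.

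Third, I would make the one-step recursion along $\pi$ explicit. At $i_k$ the off-path child is frozen to a sign $s_k\in\{+,-\}$, and summing out $\sigma_{i_k}$ using the cell constraint at $C_{i_k}$ and the Boltzmann weights of the two incident bonds yields a linear map $(Z_{i_{k+1}}^{+},Z_{i_{k+1}}^{-})\mapsto(Z_{i_k}^{+},Z_{i_k}^{-})$ given by a nonnegative $2\times2$ transfer matrix $M^{s_k}$ whose off-diagonal-to-diagonal ratio is of order $e^{-2\beta J}$. Acting projectively on $r_k$, each $M^{s_k}$ is a strict contraction of the Hilbert (Birkhoff) projective metric with coefficient $O(e^{-2\beta J})=O(e^{-\beta})$ in the present normalization, uniformly over the admissible range of ratios. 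Composing the $R$ maps along $\pi$, the projective distance between the two images of the level-$R$ vectors is bounded by $C\,(e^{-\beta})^{R}$; transporting this back to the magnetization through the Lipschitz map $r\mapsto(r-1)/(r+1)$ gives exactly the estimate~(\ref{EstR}), and letting $R\to\infty$ yields essential continuity at $\eta'$.

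Finally, the exceptional alternating configurations $\eta'^1_{\rm alt},\eta'^2_{\rm alt}$ of~(\ref{alter}) are precisely the degenerate sign patterns $(s_k)_k$ along $\pi$ for which the composed transfer map ceases to be a \emph{uniform} strict contraction — the regime in which the associated projective map degenerates — so the damping argument breaks down and these configurations must be excluded (and treated separately). I expect the main obstacle to be twofold: justifying rigorously that the off-path finite excursions of zeros contribute only boundary-independent weights, so that the problem genuinely collapses to the one-dimensional product of the $M^{s_k}$; and establishing the uniform-in-$r$ contraction while correctly isolating the alternating patterns as the sole obstruction, which requires controlling the admissible range of the $r_k$ and the worst-case behaviour of products of the two matrices $M^{+},M^{-}$.
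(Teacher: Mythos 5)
Your proposal follows essentially the same route as the paper's proof: reduce the boundary dependence to the unique infinite path of zeros (off-path finite excursions being handled exactly as you suggest, via the level-independence of Lemma \ref{lem.levelindep}), encode the constrained spins along the path by products of nonnegative $2\times 2$ transfer matrices indexed by the neighboring sign pattern, show that these products contract the relevant ratios at rate $e^{-\beta}$ uniformly, and exclude the alternating patterns of (\ref{alter}) as the unique degenerate case. The paper implements your ``projective contraction'' step by explicit computation rather than Birkhoff's theorem: it tracks the row ratios $x_n=a_n/b_n$, $y_n=c_n/d_n$ of the accumulated matrix products and checks that each of the four admissible one-step maps is a M\"obius map with Lipschitz constant at most $e^{-\beta}$ on the invariant domain reached after the finite time $n_0(\eta')$ at which all entries of the product become positive --- which is precisely the restriction to your ``admissible range of ratios'' that is unavoidable here, since the individual matrices carry zero entries from the hard-core constraint and so are not strict Hilbert-metric contractions on the whole cone.
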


\begin{proof}

Let us consider,  for $\eta' \in \Omega'$ such that $N(\eta')=1$,
the set of sites $\pi(\eta')$ to be the (unique) infinite cluster from the origin in configuration $\eta'$. We restrict ourselves to configurations where the spin in the
cells in the neighborhood of the path is never zero and denote the path by $\pi$. If this is not the case, one can start the tree at the first such cell, and evaluate the magnetization at this site ``$r$''. As the labeling of the trees should not affect our results, we can also assume without loss of generality that $\pi$ is the set of sites having
binary representation ``$1$'', except the first one.

Since the length of all the other paths of zeros is {finite},  we can always consider an integer $R_1=R_1(\eta')$ which is the maximal length of these other paths  (similarly to the
definition of $R_0$ adopted in Section~\ref{sss.cn0}). {Consider} now a region which is sufficiently far away from the origin $r$, with $R \geq R_1$, {\em i.e.}\ where no such terminating path of zeros {penetrates}. Define also the projection
$Y'$ of $\eta'$ onto the only infinite path $\pi$ by $Y'_{R}=\eta'_{W_{R} \cap \pi}$,
{usefully extended} on the other neighbors of the
origin by
$
Y'_{-1}=\eta'_{0}.
$
 Write as usual $Y=T^{-1}(Y')$ and define the shortcut $X_n=Y_{R-n+1}$, for all integers $n \leq R$. We study the asymptotic
{behavior} of the $\nu$-magnetization when we have around the origin a
b.c.\ in a neighb{o}rhood of $\eta'$ :
\begin{displaymath}
\langle \sigma'_r \rangle^{\eta',R}=
\nu[\sigma'_r =+\mid A'_R]-\nu[\sigma'_r =-\mid A'_R],
\end{displaymath}
where
$$
A'_R=\{\sigma' \in \Omega', \sigma'_{V_{R} \setminus \{r\}}=\eta'_{V_{R} \setminus \{r\}}\}.
$$  
 Up to now, $\eta'_{0} \neq 0$ and we shall assume without loss of generality
that $\eta'_{0}=+$. This implies $\sigma_{0}=+$ and, with the usual notation  $A_R=T^{-1}(A'_R)$, we get
\begin{eqnarray*}
\langle \sigma'_r \rangle^{\eta',R}
&=& \mu[\sigma_r =\sigma_{0}=\sigma_{1}=+
\mid A_R]\\
&=&  \mu[\sigma_r =\sigma_{1}=+
\mid A_R \cap \{\sigma_{0}=+ \}]\mu[\sigma_{0}=+
\mid A_R]\\
&=&  \mu[\sigma_r =\sigma_{1}=+
\mid A_R \cap \{\sigma_{0}=+ \}].
\end{eqnarray*}
When $\eta'_{0}=-$, we {analogously} obtain $\langle \sigma'_r \rangle^{\eta',R}=\mu[\sigma_r=\sigma_{1}=-
\mid A_R \cap \{\sigma_0 =-\}]$. Thus we only have to study the behavior
of
\begin{displaymath}
\mu[\sigma_r =\sigma_{1}=+
\mid A_R \cap \{\sigma_{0}=+ \}] \; {\rm 
 and} \; 
 \mu[\sigma_r =\sigma_{1}=-
\mid A_R \cap \{\sigma_{0}=- \}],
\end{displaymath} 
knowing the configuration $\eta'$ along the
path is zero {under} the previous assumptions on $\eta'$: we have to study the law of $X$ with this environment $\eta'$.

\smallskip
  
Now, if $X_n$ is the spin at a site $n  \in \pi$, let $h_n$ be the {spin value} at
the neighb{o}r of $n$ which is not on the path 
and define $h=(h_n)_{n \in \mathbb{N}}$. All the possible values of $h$, which can be seen as an external field for the process $X$, are determined by the
configuration $\eta'$ but are independent of $R$ because of the
uniqueness of the path of zeros. Let us denote by $\mathbb{P}$ the probability measure $\mu[ .  \mid A_R]$. We shall study first the law of $X$ under $\mathbb{P}$.

\smallskip

Forget for a while the constraint due to the path of zeros and study
  the law of $X$ without it. With the
  fixed spin at the origin and a fixed external configuration (field)
  $h=h(\eta')=(h_n)_{n \in \mathbb{N}}$, the law of
  $X$ is exactly the law of a one dimensional Ising model at inverse temperature
  $\beta$ with coupling $J>0$ and external magnetic field $g$ defined
  by
\be \label{gR}
 g_{R+1}=0 \; \textrm{and} \; \forall n \in \mathbb{N}, \; g_n= \beta J h_n.
\ee

We shall now incorporate the coupling in the temperature, assume
$J=1$ and denote $\gamma$ the inhomogeneous specification of this
Ising chain on $\mathbb{Z}$ (see {\em e.g.} \cite{HOG} for details). 

\smallskip

From now on we can proceed \emph{via} a classical transfer matrix approach: define $\forall n \in \mathbb{N}$ a $2 \times 2$ matrix
$Q'_n$ by
\be \label{Qprime}
\forall n \in \mathbb{N}, \; \forall x,y \in \{-,+\}, \;
Q'_n(x,y)=\exp\big(\beta xy +\frac{\beta}{2}(h_{n-1}x+h_ny)\big);
\ee
and rewrite the specification in terms of this transfer
matrix: Define $V=\{i+1, \dots , k-1 \} \subset \mathbb{N}$ and let $\sigma, \omega \in
\{-,+\}^{\mathbb{N}}$. Then

\begin{displaymath}
\gamma_V(\sigma_V\mid
\omega)=\frac{Q'_{i+1}(\omega_i,\sigma_{i+1})(\prod_{j=i+2}^{k-1}Q'_j(\sigma_{j-1},\sigma_{j}))
  Q'_{k}(\sigma_{k-1},\omega_k)}{(\prod_{j=i+1}^{k}Q'_j)(\omega_i,\omega_k)}.
\end{displaymath}


We first derive  the
homogeneous case $h_n=+ \; \forall n \in \mathbb{N}$,
assuming  $\eta'_{0}=+$\footnote{The cases $\eta'_{0}=-$ or $h_n=- ,  \;\forall n \in
  \mathbb{N}$, are treated similarly.}, and afterwards for a general, {possibly inhomogeneous} environment $h$ (leading to
inhomogeneous Markov chains), without the constraints of being on the path in a first step, and imposing it in a second step.

\smallskip

{\bf Homogeneous external field case}

\smallskip

Let us assume $\eta'_{0}=+$, $h_n(\eta')=+ ,\; \forall n \in
\mathbb{N}$ and write $\eta'=\eta'^+$. We
have

\begin{displaymath}
\forall n \in \mathbb{N}, \; Q'_n=Q'=
\left(\begin{array}{cc} 1&
      e^{-\beta}\\ e^{-\beta} &  e^{2\beta}
\end{array}
\right).
\end{displaymath}

It is well known that \emph{there is a one to one correspondence
between the set of all positive homogeneous Markov specifications and
the set of all stochastic matrices on $E$ with no vanishing entries} \cite{HOG}.  Hence, 
under the environment $h$ (without the constraint on
the path of zeros), the law of $X$ is that of an homogeneous Markov
chain with a transition matrix 
\begin{displaymath}
P=\left(\begin{array}{cc} a &
      1-a \\ 1-b &  b
\end{array}\right),
\end{displaymath}
{where} $a=\frac{e^{-\beta}}{\cosh{\beta}+\sqrt{e^{-\beta} +
    (\sinh{\beta})^2}} \; \in \; ]0,1[$ and $b=e^{2\beta}$.

\medskip

Let us now introduce the constraint of being on the path of zeros. The
law of $X$ is still Markovian, but  now some transitions are forbidden. In
this case with $h_n=+, \; \forall n \in \mathbb{N}$, the transition
matrix becomes
\begin{displaymath}
P_+=\left(\begin{array}{cc}
a & 1 - a \\
1  & 0
\end{array}
\right)
\end{displaymath}
because, with this constraint, necessarily
\begin{displaymath}
\{h_{n+1}=+,X_n=+\} \Longrightarrow \{X_{n+1}=-\}
\end{displaymath}
otherwise the infinite path of zero{e}s would end at cell $C_{n}$.

Now, recall that in order to study the continuity of the magnetization under the
environment {induced by $\eta'^+$}, we have to study the asymptotic behavior, on the
neighb{o}rhoods of $\eta'^+$ and when $R$
goes to infinity, of $\langle \sigma'_r \rangle^{\eta'^+,R}$. {Under} the
assumption of $\eta'_{0}=+$, we get
\be
 \label{eq.sprreR}
\begin{split}
\langle \sigma'_r \rangle^{\eta'^{+},R}=&\mu\big[\{\sigma_{1}=\sigma_r=+\}  \mid  A_R \cap
\{\sigma_0=+\}\big]\\
=&\gamma_{R}\big[X_R=X_{R+1}=+ \mid X_0=x_0\big]\\
=&\mu_P\big[X_R=X_{R+1}=+ \mid X_0=x_0\big]\\
=&\mu_P\big[X_{R+1}=+ \mid X_R=+\big]\mu_P\big[X_R=+\mid X_0=x_0\big]\\
=&P(+,+)(P_{+}^{R+1})_{x_0,+},\\
\end{split}
\ee
in accordance with the expression of the Ising random field as a Markov
chain\footnote{Here, $(.)_{x_0,+}$ denotes the column $+$ and line $x_0$ of the
matrix, with $x_0=+$ or $-$.}. On the first step, we keep the matrix $P$ because
there is no constraint. Thus, studying continuity of the magnetization is
reduced to the study of the dependence on $x_0$ of the 
expression in {(\ref{eq.sprreR})}. By the usual diagonalization {procedure}, we get for all  $n \in \mathbb{N}$:
\begin{displaymath}
P_{+}^{n}=\left(\begin{array}{ccc}
\frac{1}{2-a} + \frac{(a-1)^{n+1}}{2-a} & \frac{1 -
  a}{2-a}-\frac{(a-1)^{n+1}}{2-a} \\
\\
\frac{1}{2-a} + \frac{(a-1)^{n}}{2-a}  & \frac{1 - a}{2-a}-\frac{(a-1)^{n}}{2-a}
\end{array}
\right),
\end{displaymath}
and hence $M\coloneqq \lim_{n \to \infty} P_{+}^{n}$ is {simply}
\be \label{eq.Mmat}
M=\left(\begin{array}{ccc}
\frac{1}{2-a} & \frac{1 -
  a}{2-a} \\
\\
\frac{1}{2-a}  & \frac{1 - a}{2-a}
\end{array}
\right).
\ee
Notice that the
elements of each column in $M$ are equal, implying that $\lim_{R \to
  \infty}(P_{+}^{R+1})_{x_0,+}$ is independent on $x_0$: \emph{the magnetization is a continuous function of the
boundary condition}. It is given by
\begin{displaymath}
\langle \sigma'_r \rangle^{\eta'^{+}}=e^{2\beta}a\frac{1 - a}{2-a}
\end{displaymath}
for a configuration $\eta'^+$ with one infinite path of zeros and $h_n=+$ everywhere. For a configuration $\eta'^{-}$ with only one path of
zeros and $h_n=-$ everywhere, the same
continuity result holds, and we simply have
\begin{displaymath}
\langle \sigma'_r \rangle^{\eta'^{-}}=-e^{2\beta}a\frac{1 - a}{2-a} = -\langle \sigma'_r \rangle^{\eta'^{+}},
\end{displaymath}
where we recall that $a={a(\beta)}=\frac{e^{-\beta}}{\cosh{\beta}+\sqrt{e^{-\beta} +
    (\sinh{\beta})^2}} \; \in \; ]0,1[$. 
    
    
    
\smallskip

{\bf Inhomogeneous external field case}

\smallskip

Here, we deal with an inhomogeneous Markov Chain
without knowing exactly the transition matrix (we only know a
transfer matrix and the notion of boundary laws, see \cite{HOG}, {Def}.\ 12.10). With the path of zeros constraint, the law of $X$
is still Gibbsian but the potential now becomes

\begin{displaymath}
{\Psi}_A(\omega)=\left\{
\begin{array}{lllllllll}\; -\beta \omega_{n-1}\omega_{n} -\frac{\beta}{2}(h_n\omega_n+h_{n-1}\omega_{n-1}) \\
 \; \; \;  \; \; \;  \; \; \; \textrm{iff}\;
  A=\{n-1,n\} \;\textrm{and} \; h_n\omega_{n-1}=h_n\omega_n=+\\
\\
\; + \infty \; \; \textrm{iff} \;A=\{n-1,n\} \;\textrm{and}\; \omega_n=\omega_{n-1}=h_n\\
\\
\; 0 \; \;  \; \; \; \textrm{otherwise},
\end{array} \right.
\end{displaymath}

because some transitions are forbidden under the constraint. The potential can take infinite values and it provides a kind of {\em hard core exclusion potential}, but the formalism is the same as in the {finite potential case} (provided the objects are defined \cite{HOG}). We {then}
have to deal with various products of the matrices $Q_{h_nh_{n-1}}$
depending on $h$, {so that it is useful to introduce}:
\begin{displaymath}
Q_{++}=\left(\begin{array}{ccc}
1& e^{-\beta}\\
\\
e^{-\beta}  & 0
\end{array}
\right)
\; , \;
Q_{--}=\left(\begin{array}{ccc}
0 & e^{-\beta}\\
\\
 e^{-\beta} & 1
\end{array}
\right)
\; , \;
\end{displaymath}
\begin{displaymath}
 Q_{-+}=\left(\begin{array}{ccc}
0& e^{-2\beta}\\
\\
1 & e^{\beta}
\end{array}
\right)
\; \textrm{and}  \; \;
Q_{+-}=\left(\begin{array}{ccc}
e^{\beta}&1\\
\\
e^{-2\beta} & 0
\end{array}
\right).
\end{displaymath}
Keeping the same notations as in the homogeneous case, we first deal with the case $\eta'_0=+$. Let $x_0 \in \{-1,+1\}$ and define, for $R \geq 0$,
\begin{displaymath}
\langle \sigma'_r \rangle^{\eta',R}_{x_0}=\nu\big[\sigma'_r \mid \sigma'_{\{r\}^{c}}=\omega'_{\{r\}^{c}}, \omega' \in \mathcal{N}^{R}(\eta'), X_0= x_0 \big].
\end{displaymath}

Proceeding as usual, we get, in this case where $\eta'_0=+$,
\begin{eqnarray*}
\langle \sigma'_r \rangle^{\eta',R}_{x_0}&=&\nu\big[\sigma'_r=+ \mid \sigma'_{\{r\}^{c}}=\omega'_{\{r\}^{c}}, \omega' \in \mathcal{N}^{R}(\eta'), X_0= x_0 \big]\\
&=&\mu\big[X_R=X_{R+1}=+ \mid X_{R+2}=+,h,X_0=x_0\big],
\end{eqnarray*}
where the notation ``$h$'' means a conditioning with the event of being under the environment $h=(h_n)$ besides the path of zeros. Using again the expression {(\ref{Qprime})} with transfer matrices,
\begin{eqnarray*}
\langle \sigma'_r \rangle^{\eta',R}_{x_0}&=&\frac{Q'_{R+2}(+,+)Q'_{R+1}(+,+)[Q_R \dots Q_n \dots Q_1]_{x_0,+}}{Q'_{R+2}(+,+)Q'_{R+1}(+,+)([Q_R \dots Q_n \dots Q_1]_{x_0,-}+Q'_{R+2}[Q_R \dots Q_n \dots Q_1]_{x_0,+})}\\
&=&\frac{[Q_R \dots Q_n \dots Q_1]_{x_0,+}}{[Q_R \dots Q_n \dots Q_1]_{x_0,-}+[Q_R \dots Q_n \dots Q_1]_{x_0,+}}
\end{eqnarray*}
where, due to the constraints and the fields (\ref{gR}) given by $g_{R+1}=0, g_{R+2}=2, g_n=h_n \; \forall  n \leq R$,
\begin{displaymath}
 Q'_{R+2}=\left(\begin{array}{ccc}
1& 1\\
\\
e^{-2\beta} & e^{2\beta}
\end{array}
\right)
\; \textrm{,}  \; \;
Q'_{R+1}=\left(\begin{array}{ccc}
e^{\beta - \frac{\beta h_R}{2}}&e^{-\beta - \frac{\beta h_R}{2}}\\
\\
e^{-\beta + \frac{\beta h_R}{2}}& e^{\beta + \frac{\beta h_R}{2}}&
\end{array}
\right)
\end{displaymath}
and
\begin{displaymath}
\forall n \leq R, \quad Q_n=Q_{h_n h_{n-1}}.
\end{displaymath}
 Thus, in order to study the continuity of this magnetization as a function of the boundary conditions $h_0,h_1$ and $x_0$, we have to study the asymptotic behavior, {depending on} $h_0,h_1,x_0$, of {the matrix products}
\begin{displaymath}
P_R=Q_R \dots Q_1=\prod_{n=R}^{1}Q_n=\prod_{n=R}^{1}Q_{h_nh_{n-1}}
\end{displaymath}
taking {into} account the constraints\footnote{The constraints imply that some products such as $Q_{++}Q_{-+}$ are forbidden.}. 
Denote
\begin{displaymath}
P_n=\left(\begin{array}{ccc}
a_n & b_n\\
\\
c_n & d_n
\end{array}
\right)
\end{displaymath}
and {assume first that}, there exists $n_0=n_0(\eta')< + \infty$ {s.t.~at least} for $n \geq n_0 \in \mathbb{N}$,  no entry of $P_{n}$ is zero, namely $a_n \cdot b_n \cdot c_n \cdot d_n >0$.
{This condition (finiteness of $n_0$) holds for all configurations} $\eta'$ except special 
{\em alternating configurations}  (see \eqref{alter}).

For such $\eta'$,  take now $n \geq n_0=n_0(\eta')$ and denote $x_n=\frac{a_n}{b_n}$ and $y_n=\frac{c_n}{d_n}$. We want to study the asymptotic  behavior of the sequences $x=(x_n)_{n \in \mathbb{N}}$ and $y=(y_n)_{n \in \mathbb{N}}$. We have {the general pattern} $P_{n+1}=P_n \times A_n$, with
\begin{displaymath}
A_n \in \{Q_{++},Q_{+-},Q_{--},Q_{-+}\}
\end{displaymath}
so that we can consider the four cases separately.

\begin{description}
\item[Case {$A_n = Q_{++}$}]: 
We obtain
\begin{displaymath}
P_{n+1}=\left(\begin{array}{ccc}
a_n + b_n e^{-\beta}& a_n e^{-\beta}\\
\\
c_n + d_n e^{-\beta}& c_n e^{-\beta}
\end{array}
\right)
\end{displaymath}
and this yields the same evolution for $(x_n)_{n \in \mathbb{N}}$ and $(y_n)_{n \in \mathbb{N}}$:
\begin{displaymath}
x_{n+1}=f_1(x_n) \; \textrm{and} \; y_{n+1}=f_1(y_n),
\end{displaymath}
where $f_1$ is defined for all $x >0$ by
\begin{displaymath}
f_1(x)=1+ \frac{e^{-\beta}}{x} \geq 1.
\end{displaymath}
We also have
\begin{displaymath}
\forall x \geq 1, \; \mid f'_1(x) \mid = \frac{e^{-\beta}}{x^2} \leq e^{-\beta} <1.
\end{displaymath}
This application is then contracting as soon as $x \geq 1$, which will be true after one step because $f_1(x) \geq 1 \; \forall x >0$. We denote the {Lipschitz} constant $k_1=e^{-\beta}$, and we have
\begin{displaymath}
\forall n \geq n_0, \; \mid x_{n+1} - y_{n+1} \mid \leq k_1 \cdot  \mid x_n -y_n \mid .
\end{displaymath}
\item[Case $A_n=Q_{+-}$]:
We obtain
\begin{displaymath}
P_{n+1}=\left(\begin{array}{ccc}
a_n e^{\beta} + b_n e^{-2\beta} & a_n \\
\\
c_n e^{\beta}+ d_n e^{-2 \beta}& c_n
\end{array}
\right)
\end{displaymath}
and the same contractive result holds with the function $f_2$ : 
\begin{displaymath}
f_2(x)=e^{\beta}+ \frac{e^{-2\beta}}{x} > e^{\beta} >1, {\rm for \; all \; } x >0.
\end{displaymath}
Now
\begin{displaymath}
\forall x \geq 1, \; \mid f'_2(x) \mid = \frac{e^{-2\beta}}{x^2} \leq e^{-2\beta} <1
\end{displaymath}
and we will denote by $k_2=e^{-2\beta}$ this second Lipschitz constant.
\item[Case $A_n=Q_{--}$]:
We obtain
\begin{displaymath}
P_{n+1}=\left(\begin{array}{ccc}
b_n e^{-\beta}& a_n e^{-\beta} + b_n\\
\\
d_n e^{-\beta}& c_n e^{-\beta} +d_n
\end{array}
\right)
\end{displaymath}
and 
$
f_3(x)=\frac{1}{x+e^{\beta}} ,\; \forall x > 0,
$
which gives a  Lipschitz constant $k_3=k_2=e^{-2\beta}$.
\item[Case $A_n=Q_{-+}$]:
The contraction holds  with  $k_4=k_2=e^{-2\beta}$.
\end{description}

\smallskip

Thus, whatever the configuration ${\eta'}$ different\footnote{In the sense that it should not contain infinite alternate paths like in $\eta'^{1/2}_{\rm alt}$ in (\ref{alter}). By abuse of notation, we write $\eta' \neq \eta'^1_{\rm alt},\eta'^2_{\rm alt}$ for this property.} from the alternating ones we always have
\be \label{eq.ubAQ}
\forall n \geq n_0, \; \mid x_n - y_n \mid \leq e^{n_0 \beta}\left(\frac{1}{e^{\beta}}\right)^n\cdot \mid x_{n_0}-y_{n_0} \mid
\ee
{for some finite $n_{0}=n_{0}(\eta')$}.
Let us come back to the magnetization (\ref{condmagn1}). Whatever $h_0$ and $h_1$ are, we have
\begin{displaymath}
\langle \sigma'_r \rangle^{{\eta'},R}_{x_0}=\frac{[P_R]_{x_0,+}}{[P_R]_{x_0,}+[P_R]_{x_0,+}},
\end{displaymath}
which yields
$
\langle \sigma'_r \rangle^{\eta',R}_{+}=\frac{d_R}{c_R + d_R} \; \textrm{and} \; \langle \sigma'_r \rangle^{\eta',R}_{-}=\frac{a_R}{a_R + b_R}.
$
If $n \geq n_0$, we get
\begin{displaymath}
\langle \sigma'_r \rangle^{\eta',R}_{+}=\frac{1}{1 + y_R} \; \textrm{and} \; \langle \sigma'_r \rangle^{\eta',R}_{-}=\frac{1}{1 +x_R}
\end{displaymath}
and
\be \label{DeltaR}
\mid \langle \sigma'_r \rangle^{\eta',R}_{+} - \langle \sigma'_r \rangle^{\eta',R}_{-} \mid \leq \mid x_R - y_R \mid \leq e^{n_o \beta} \cdot \left(\frac{1}{e^{\beta}}\right)^R \mid x_{n_0}-y_{n_0} \mid
\ee

Now, for all {$\eta' \neq \eta'^1_{\rm alt},\eta'^2_{\rm alt}$, write $$C(\eta') \coloneqq e^{n_0(\eta')(\beta)} \mid x_{n_0(\eta')}-y_{n_0(\eta')} \mid$$

so that $$C \coloneqq \sup_{\eta'\neq {\eta'^1_{\rm alt},\eta'^2_{\rm alt}}} C(\eta') < \infty$$ is well defined. This proves that (\ref{DeltaR}) converges to zero as $R \to \infty$,   proving also (\ref{EstR}).

\smallskip

Proceeding similarly in the case $\eta'_0=-$, and conditioning on the events $\{X_{R+2}=+\}$ and $\{X_{R+2}=-\}$, we get continuity of the magnetization for all the configurations having only one path of zeros {so that $n_0(\eta')<\infty$}, {\em i.e.}~except the so-called exceptional {\em alternating configurations} described below.  This proves Lemma 4.3.

\smallskip 
{\bf Remark }(Peculiar alternating configurations)  Assume again $\eta'_0=+$ and define  {$\eta'^1_{\rm alt}$} and { $\eta'^2_{\rm alt}$} to be such that $N(\eta')=1$ and
\be \label{alter} 
\forall i=1,2,\; \forall R \in \mathbb{N}, \; \forall n=0 \dots R-1, \quad h^i_R=(-1)^i,\; h^i_n=-h^i_{n+1}
\ee
where $h^1$ (resp.\ $h^2$) is the environment under {$\eta'^1_{\rm alt}$} (resp.\ $\eta'^2_{\rm alt}$). For such configurations, elementary analysis of the products of matrices considered in this section show that the limits considered do not exist, because sequences alternate between two accumulation points. Nevertheless,  we prove in Lemma \ref{4.7} that such configurations form a $\nu$-negligible set.}

\end{proof}

 \subsection{Sufficient condition for essential continuity }

Recall that, for a configuration $\eta'$, $N_{R}(\eta')$ denotes  the number of paths of zeros up to level $R$ in $\eta'$ and $N=N(\eta')=\lim_{R\to \infty}N_{R}(\eta')$ exists by monotonicity. Use Lemma \ref{thm.contmagn} to show
\begin{lemma}\label{lem.suffcondec}
Let $\eta' \in \Omega'$ with $N(\eta')$ arbitrary, different from the null configuration or the alternate configurations defined above in (\ref{alter}). Then 
\be\label{suffcondec}
\forall R \geq 0, \sup_{\omega'_1,\omega'_2 \in \mathcal{N}^R(\eta')}\mid \langle \sigma'_r \rangle^{\omega'_1,R}-\langle \sigma'_r \rangle^{\omega'_2,R} \mid \leq C \cdot N_{R}(\eta') \cdot (e^{-\beta})^R.
\ee
\end{lemma}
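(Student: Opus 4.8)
The plan is to reduce the general case to the single-path estimate \eqref{EstR} of Lemma \ref{thm.contmagn} by decomposing the boundary variation path by path. The starting observation is that, since $\omega'_1,\omega'_2 \in \mathcal{N}^R(\eta')$ agree with $\eta'$ on the ball $V_R$, they can differ only beyond level $R$; and by the transmission mechanism underlying Lemmas \ref{lem.levelindep}--\ref{thm.esscontmagnbc} the value $\langle \sigma'_r\rangle^{\omega',R}$ depends on $\omega'_{V_R^c}$ only through those subtrees hanging below level $R$ whose connecting path back to $r$ consists entirely of zeros (a non-zero image on a cell forces the three underlying spins to agree and thus blocks any further boundary influence). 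In a tree each such path is determined by its endpoint in $W_R$, so there are exactly $N_R(\eta')$ of them; I denote the endpoints $v_1,\dots,v_{N_R}$ and by $T_{v_1},\dots,T_{v_{N_R}}$ the disjoint subtrees of sites beyond level $R$ descending from them.

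First I would build a telescoping chain. Define configurations $\zeta_0=\omega'_2,\zeta_1,\dots,\zeta_{N_R}=\omega'_1$, where $\zeta_k$ agrees with $\omega'_1$ on $\bigcup_{j\le k}T_{v_j}$, with $\omega'_2$ on $\bigcup_{j>k}T_{v_j}$, and with $\eta'$ on $V_R$; each $\zeta_k$ lies in $\mathcal{N}^R(\eta')$ and consecutive ones differ only inside the single subtree $T_{v_k}$. By the triangle inequality,
\[
\big|\langle \sigma'_r\rangle^{\omega'_1,R}-\langle \sigma'_r\rangle^{\omega'_2,R}\big|\le \sum_{k=1}^{N_R}\big|\langle \sigma'_r\rangle^{\zeta_k,R}-\langle \sigma'_r\rangle^{\zeta_{k-1},R}\big|,
\]
so it suffices to bound each summand by $C\,(e^{-\beta})^R$. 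For a single summand I would re-run the transfer-matrix contraction of Lemma \ref{thm.contmagn} along the path $r\to v_k$: since $\zeta_k$ and $\zeta_{k-1}$ differ only in $T_{v_k}$, I condition on all Ising spins off this path and treat them as the external field $h$ of a one-dimensional chain of length $R$ from $r$ to $v_k$, so that the difference of the two conditional magnetizations is governed by $|x_R-y_R|$ exactly as in \eqref{DeltaR}. The four-case analysis bounds this by $e^{n_0\beta}(e^{-\beta})^R|x_{n_0}-y_{n_0}|\le C\,(e^{-\beta})^R$ with the same uniform constant $C=\sup_{\eta'\neq\eta'^1_{\rm alt},\eta'^2_{\rm alt}}C(\eta')$, and integrating this conditional bound over the environment preserves it. Summing over $k$ then yields \eqref{suffcondec}.

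The delicate point is this localization step: when several zero-paths share an initial segment and branch, varying the boundary inside $T_{v_k}$ perturbs spins on the shared segment that are simultaneously neighbors of another path, so the field $h$ seen by the chain $r\to v_k$ is itself random rather than frozen. The tool to control this is the Markov (splitting) structure of Gibbs measures on the tree, Theorem \ref{thm.split}: conditioning on the spin at the branch vertex decouples the subtree $T_{v_k}$ from the remainder, so the conditional law of the environment along $r\to v_k$ does not itself move when one switches $\omega'_2\to\omega'_1$ inside $T_{v_k}$, legitimizing the condition-and-average scheme above. What makes the scheme go through is that the per-step contraction of Lemma \ref{thm.contmagn} is \emph{uniform in the environment}—each of $Q_{++},Q_{+-},Q_{--},Q_{-+}$ is Lipschitz with constant at most $e^{-\beta}$ on the ratios $x_n,y_n$—so the factor $(e^{-\beta})^R$ over a path of length $R$ is insensitive to $h$. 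I expect the main obstacle to be making this decoupling-and-averaging rigorous, together with checking that the exclusion of the alternating configurations \eqref{alter} is exactly what guarantees $n_0(\eta')<\infty$ along every relevant path, keeping $C$ uniform across the $N_R(\eta')$ telescoping terms.
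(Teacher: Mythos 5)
Your telescoping decomposition is a genuinely different route from the paper's: the paper never interpolates over the boundary subtrees, but instead splits the estimate \emph{at the branch vertices} of the tree of zero paths, using the Markov/splitting property to bound the root magnetization difference by a product of children magnetizations, then the elementary inequality $|xy-zw|\leq|x-z|+|y-w|$ for $x,y,z,w\in[0,1]$, iterating until each piece contains a single path of zeros, to which Lemma \ref{thm.contmagn} applies verbatim. Your scheme does produce the factor $N_R(\eta')$ cleanly (one summand per path endpoint), and your decoupling observation is correct as far as it goes: since $\zeta_{k-1}$ and $\zeta_k$ agree outside $T_{v_k}$, the Markov property guarantees that the effective one-dimensional chain along $\pi_k$ has the \emph{same} law under both conditionings, with only the far boundary condition differing.

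The gap is in the per-summand bound. You propose to ``re-run the transfer-matrix contraction of Lemma \ref{thm.contmagn} along the path $r\to v_k$,'' but that lemma's analysis is carried out under its standing restriction that the cells neighboring the path carry \emph{non-zero} image spins: this is exactly what makes the environment $h$ a deterministic $\pm$-valued sequence and produces the four matrices $Q_{++},Q_{+-},Q_{--},Q_{-+}$, whose Lipschitz constants $k_i\leq e^{-\beta}$ are computed entry by entry. In your setting this restriction fails precisely at every branch vertex where another zero path splits off from $\pi_k$: there the off-path child has image spin $0$, its underlying Ising spin is not frozen, and the step of the chain at that vertex is governed by an \emph{effective} transfer matrix obtained by integrating out the sibling subtree (itself carrying zero-path constraints) under the cell constraint $\eta'_u=0$. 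That matrix is not among the four $Q$'s, and no contraction estimate for it exists anywhere in the argument; the claim that the contraction is ``uniform in the environment'' begs the question, because at branch steps the environment is not a value $h_n\in\{+,-\}$ at all. Since a single path may have order $R$ branch points, these steps cannot be dismissed, and without a contraction rate $\leq e^{-\beta}$ for them the factor $(e^{-\beta})^R$ is not obtained. The paper's proof is organized exactly so as never to push the one-dimensional chain through a branch point: the product inequality splits the estimate there, so Lemma \ref{thm.contmagn} is only ever invoked for genuinely single-path configurations (the $\overline{\eta}'^{\,k}$, in which the other paths are replaced by $+$'s). To close your argument you would either have to prove the missing contraction bound for the branch-step effective matrices, or perform the paper's splitting at branch vertices --- at which point the two proofs coincide.
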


\begin{proof}

The proof consists in associating the recursive tree structure and the Markov property of the measure $\mu$ to the estimate (\ref{EstR}) of Lemma \ref{thm.contmagn} for each path of zeros of length $R$. 

First of all, let $\Pi(\eta')=\left(\pi_{k}\right)_{k=1}^{N_{R}(\eta')}$ be the set of path of zeros in $\eta'$ according to some (inessential) ordering, and let $\overline{\eta}'^{k}$ be the configuration coinciding with $\eta'$ everywhere, except at the $N_{R}(\eta')-1$ path of zeros different from the $k$-th one, that is, $\overline{\eta}'^{k}=\eta'_{\pi_{k}\cup \Pi^{c}}+_{\pi_{k}^{c} \cup {\Pi}}$. Let $n_{k}(\eta')$ be the smallest integer such that no entry of the matrix $P_{n}$ along $\pi_{k}$ is zero; that is, accordance with our notations in Lemma \ref{thm.contmagn}, $n_{k}(\eta')=n_{0}(\overline{\eta}'^{k})$.  

If $N_{R}(\eta')=1$ the claim just coincides with Lemma \ref{thm.contmagn}. Let us now prove it for $N_R(\eta')=2$. First, observe that, for any $\omega_{1}',\omega_{2}' \in \mathcal{N}^{R}(\eta')$, by Markov property and splitting on the tree,
$$
\forall R > 0,  \left|\magnprime{r}{\omega'_{1},R}-\magnprime{r}{\omega'_{2},R}\right| \leq p(\beta) \left|\magnprime{r1}{\omega'_{1},R}\magnprime{r0}{\omega'_{1},R}-\magnprime{r1}{\omega'_{2},R}\magnprime{r0}{\omega'_{2},R}\right|
$$
where $p(\beta) \coloneqq \nu[\eta'_{0}=0] =\frac{2+e^{-2\beta}}{\left(e^{-\beta}+e^{\beta} \right)^{2}}$ (see Section 4.4).

 Second, remark that any four numbers $x,y,z,w \in [0,1]$ we have 
\be
\begin{split}
|xy-zw|&=|xy-yz+yz-zw|=|y(x-z)+z(y-w)|\underset{\rm tr. ineq.}{\leq} y|x-z| + z|y-w|\\
&\underset{y,z \in [0,1]}{\leq} |x-z| + |y-w| \; .
\end{split}
\ee
Putting
$$
\begin{cases}
x &= \magnprime{r0}{\omega'_{1},R}\\
y &= \magnprime{r1}{\omega'_{1},R}\\
z &= \magnprime{r0}{\omega'_{2},R}\\
w &= \magnprime{r1}{\omega'_{2},R}\\
\end{cases}
$$
gives
\be
\label{eq.fundineq}
\left|\magnprime{r1}{\omega'_{1},R}\magnprime{r0}{\omega'_{1},R}-\magnprime{r1}{\omega'_{2},R}\magnprime{r0}{\omega'_{2},R}\right| \leq \left|\magnprime{r0}{\omega'_{1},R}-\magnprime{r0}{\omega'_{2},R}\right|+\left|\magnprime{r1}{\omega'_{1},R}-\magnprime{r1}{\omega'_{2},R}\right|.
\ee

Now, apply Lemma \ref{thm.contmagn} to each of the quantities of the rhs, we get 
 $$
 \forall R \geq 0, \sup_{\omega'_1,\omega'_2} \mid \langle \sigma'_r \rangle^{\omega'_1,R}-\langle \sigma'_r \rangle^{\omega'_2,R} \mid \leq 2 \, C \, (e^{-\beta})^R,
 $$

which is the claim for $N_{R}(\eta')=2$.
The result for general (finite) $N_R(\eta')$ is obtained analogously by iteration.
\end{proof}

\subsection{Percolation model and  almost Gibbsianness}

In the previous sections we have seen that the number of infinite paths of zeros is a good discriminant for detecting failures of quasilocality. Let us consider now the {marginal  probability of being $0$ for the image configuration $\eta'$}.  Due to the overlap among neighboring {cells} involved in the majority rule transformation $T$, clearly the random variables $( \eta'_{j})_{j \in \rbt}$ are not independent. Nevertheless, we can obtain a few informations about {$\eta'_{j}$'s} due to the Markovianness\footnote{{Strictly speaking, the standard (one-sided) Markov chain property is valid only for the Gibbs measures which are indeed Markov chains, see \cite{HOG, hig, spi, zach}. Nevertheless, this is true for any extremal Gibbs measures, and once our result is proven for any extremal $\mu$, it is straightforward to extend to all Gibbs measures.}} of the original measure. For a site $j \in \rbt$, denote $j1$ and $j0$ its descendants using the binary representation introduced in Section 3. When the conditioning is possible, for $\eta' \in \Omega'$, from Bayes's formula we get
\be
\begin{split}
\label{eq.grt1}
\nu\big[\eta'_{j1}=0\big]=&  \nu\big[\eta'_{j1}=0 \mid \eta'_j=+\big]\nu\big[\eta'_{j}=+\big] + \nu\big[\eta'_{j1}=0 \mid \eta'_j=-\big]\nu\big[\eta'_{j}=-\big]+\\
+& \nu\big[\eta'_{j1}=0 \mid \eta'_j=0\big]\nu\big[\eta'_{j}=0\big]
\end{split}
\ee
(if a conditioning is not possible, an analogous formula holds without their contributions). Let us now compute separately the three different conditional probabilities appearing in (\ref{eq.grt1}). In order to do so, let us introduce the shortcut 
$$
\mu_{\pm \pm \mid \pm}\coloneqq  \mu\big[\sigma_{j0}=\pm,\sigma_{j1}=\pm \mid \sigma_{j}=\pm\big].
$$
First, for the event $\{\eta'_{j}=+ \}$, 
\be
\bsplit
 \nu\big[\eta'_{j1}=0 \mid \eta'_j=+\big]=\;&
 \; (\mu_{+ - \mid +} + \mu_{- + \mid +}) + \mu_{- - \mid +}=\frac{2}{\left(e^{-\beta}+e^{\beta} \right)^{2}} + \frac{e^{-2\beta}}{\left(e^{-\beta}+e^{\beta} \right)^{2}} \\
 =\;&  \frac{2+e^{-2\beta}}{\left(e^{-\beta}+e^{\beta} \right)^{2}}.
\end{split}
\ee

 Analogously for the event $\{\eta'_{j}=- \} $ we get
 \be
 \bsplit
 \nu\big[\eta'_{j1}=0 \mid \eta'_j=-\big]=\;& 2\mu_{+- \mid -}+\mu_{++ \mid -}
 =\; \frac{2+e^{-2\beta}}{\left(e^{-\beta}+e^{\beta} \right)^{2}}.\\
\end{split}
 \ee

Lastly, the contribution due to conditioning on $\{\eta'_{j}=0 \}$ becomes
 \begin{eqnarray*}
  \nu\big[\eta'_{j1}=0 \mid \eta'_j=0\big]
 &=& \frac{2+e^{-2\beta}}{\left(e^{-\beta}+e^{\beta} \right)^{2}}\left(\mu[\sigma_{j}=+]+\mu[\sigma_{j}=-] \right) 
 =\frac{2+e^{-2\beta}}{\left(e^{-\beta}+e^{\beta} \right)^{2}},
\end{eqnarray*}
where the factors $2$ come from $+/-$ symmetry in a progeny at fixed father. Thus, despite the spins being dependent, the three considered events are indeed uncorrelated and we just have
\begin{displaymath}\label{eq.p2b}
 \nu[\eta'_{j}=0] \coloneqq p(\beta) =\frac{2+e^{-2\beta}}{\left(e^{-\beta}+e^{\beta} \right)^{2}} \quad \forall j \in \rbt .
\end{displaymath}
{Therefore, the problem becomes one of bond percolation on a Cayley tree \emph{with a $\beta$-dependent probability of open bond} (see {\em e.g.}~\cite{Grim}, Chapter 10)}. {In the case of our binary tree  $\mathcal{T}^{2}_0$, $N(\eta')$ being the number of infinite clusters of zeros in the configuration $\eta'$, we thus have}
\begin{displaymath}
\nu[N(\eta') > 0]=\left\{
\begin{array}{lllll}\; 0 \; & \textrm{iff}& p(\beta) \leq  p_{\rm c} = \frac{1}{2} \\
\\
\; >0 \;& \textrm{iff}& p(\beta)> p_{\rm c} = \frac{1}{2}  . \\
\end{array} \right. 
\end{displaymath}
In particular, we get almost Gibbsianness at high temperatures $\beta \leq  \beta_{1}$ with $\beta_1^2=\ln{\left(1+\sqrt{2}\right)}$.

\smallskip

Let us now study almost {Gibbsianness} in the low temperature regime $\beta > \beta_{1}$. By means of the sufficient condition of Lemma \ref{lem.suffcondec}, our aim is {first} to investigate the $\nu$-measure of the set $\Omega_{g}$ of configurations for which $N_{R}(\eta')$ {(the number of paths of zeros between the root and generation $R$ in configuration $\eta'$)} grows slower than $e^{\beta R}$:
\be
\Omega_{g}\coloneqq \left\{ \eta' \in \Omega' \; : \; \lim_{R \to \infty} \frac{N_{R}}{e^{\beta R}}=0  \right\}.
\ee

\begin{lemma}\label{Omegag}
Let $\mu$ be any Gibbs measure for the Ising model on $\mathcal{T}_0^2$ and $\nu=T \mu$. Then
$$
\nu(\Omega_g) = 1.
$$
\end{lemma}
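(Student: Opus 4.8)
The plan is to prove that $\nu(\Omega_g)=1$ by showing that the complementary event, namely that $N_R(\eta')/e^{\beta R}$ fails to converge to zero, has $\nu$-probability zero. First I would fix $\beta > \beta_1$ and recall from the computation in Section 4.4 that each site is independently (in the uncorrelated sense established there) labelled $0$ with marginal probability $p(\beta)=\frac{2+e^{-2\beta}}{(e^{-\beta}+e^{\beta})^2}$, so that $N_R(\eta')$ is exactly the number of open paths from the root to generation $R$ in bond percolation on the binary tree $\rbt$ with parameter $p=p(\beta)$. The natural tool here is a first-moment (expectation) bound: I would compute $\E_\nu[N_R]$ and show it grows strictly slower than $e^{\beta R}$, then invoke a Borel--Cantelli / Markov-inequality argument to upgrade the expectation estimate to an almost-sure statement.

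The key quantitative step is to control $\E_\nu[N_R]$. On the binary tree each vertex of generation $R$ is reached by a unique geodesic from the root consisting of $R$ edges, and there are $2^R$ such vertices; since the path of zeros requires all $R+1$ sites along the geodesic to carry a $0$, linearity of expectation gives $\E_\nu[N_R] = 2^R\,p(\beta)^{R+1}$ (up to the harmless root factor), i.e.\ $\E_\nu[N_R]$ is of order $(2p(\beta))^R$. The whole argument therefore hinges on the relative size of the \emph{growth rate} $2p(\beta)$ and the \emph{threshold rate} $e^\beta$ appearing in the definition of $\Omega_g$. I would verify the elementary inequality
\be
2\,p(\beta) = \frac{2\,(2+e^{-2\beta})}{(e^{-\beta}+e^{\beta})^2} < e^{\beta}\qquad\text{for all }\beta>0,
\ee
which follows because the left-hand side stays bounded (indeed $\to 0$ as $\beta\to\infty$) while $e^\beta$ grows without bound, and can be checked directly for small $\beta$ as well. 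Granting this, $\E_\nu[N_R]/e^{\beta R}\to 0$ geometrically fast.

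To pass from expectations to the almost-sure convergence demanded by $\Omega_g$, I would apply Markov's inequality: for every $\eps>0$,
\be
\nu\!\left[\frac{N_R}{e^{\beta R}} > \eps\right] \;\le\; \frac{\E_\nu[N_R]}{\eps\,e^{\beta R}} \;\le\; \frac{1}{\eps}\,\bigl(2p(\beta)\,e^{-\beta}\bigr)^{R}\cdot(\text{const}).
\ee
Since $2p(\beta)e^{-\beta}<1$, these probabilities are summable in $R$, so by Borel--Cantelli the event $\{N_R/e^{\beta R}>\eps\}$ occurs only finitely often $\nu$-a.s.; taking $\eps$ along a sequence tending to $0$ yields $\lim_R N_R/e^{\beta R}=0$ almost surely, which is precisely $\nu(\Omega_g)=1$. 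I expect the main obstacle to be the rigorous justification of the expectation formula $\E_\nu[N_R]=2^R p(\beta)^{R+1}$: although the marginals $\nu[\eta'_j=0]=p(\beta)$ are computed and the relevant triples are shown uncorrelated in Section 4.4, the sites along a single geodesic are genuinely dependent (neighbouring cells overlap), so the linearity-of-expectation count is valid only because expectation ignores the joint law, whereas any attempt at a sharper second-moment bound would require controlling these dependencies. Here, however, the first-moment method suffices precisely because we only need an upper bound on $\E_\nu[N_R]$, and $N_R$ is a sum of indicators whose expectations are determined by the one-dimensional marginals via the Markov structure of $\mu$ along each geodesic; I would therefore make the dependence on the transfer-matrix/Markov-chain description explicit to certify the per-path probability is at most $p(\beta)^{R+1}$, completing the bound.
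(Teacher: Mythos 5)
Your route is genuinely different from the paper's. The paper never counts geodesics: it models the level-to-level evolution of $N_R$ as a nearest-neighbour random walk --- with probability $p^2$ the count increases by one, with probability $(1-p)^2$ it decreases by one --- yielding the conditional recursion $\E_\nu[N_R \mid \mathcal{F}_{V_{R-1}}] = N_{R-1}+2p-1$, hence \emph{linear} growth $\E_\nu[N_R]=p+(R-1)(2p-1)$; it then passes to the almost-sure statement not by Markov's inequality but by an exponential Chebyshev bound with a uniform cumulant estimate (Lemma~\ref{thm.expdb}), followed by a Borel--Cantelli-type display. You instead sum indicators over the $2^R$ root-to-level-$R$ geodesics to get $\E_\nu[N_R]=2^R p(\beta)^{R+1}\asymp (2p)^R$, i.e.\ \emph{exponential} growth, and finish with Markov plus Borel--Cantelli. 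These two first-moment computations are incompatible, and yours is the one dictated by the branching structure of the tree: each of the $N_{R-1}$ surviving paths has two candidate extensions, so the conditional expectation should be $2p\,N_{R-1}$ (the paper's recursion is exact only on the event $N_{R-1}=1$). The obstacle you flag --- dependence of the primed spins along a geodesic --- resolves exactly as you propose: conditioning on the spine spins $(\sigma_{i_k})_k$ makes the cell events conditionally independent, with conditional probability $1$ when consecutive spine spins disagree and $e^{-\beta}/(e^{\beta}+e^{-\beta})$ when they agree; the associated $2\times 2$ transfer matrix is symmetric with constant row sum $p(\beta)$, so the per-geodesic probability is exactly $p(\beta)^{R+1}$. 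Your route is thus simpler and more faithful to the geometry; what the paper's route buys (super-exponential deviation bounds, valid on its own terms at every temperature) rests entirely on its random-walk recursion.

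The one concrete error is your claim that $2p(\beta)<e^{\beta}$ for \emph{all} $\beta>0$. Writing $u=e^{2\beta}$, one has $p=(2u+1)/(u+1)^2$, which is \emph{decreasing} in $\beta$ (non-monochromatic cells are typical at high temperature), with $p(0)=3/4$; so at $\beta=0$ one gets $2p=3/2>1$, and still at $\beta=0.2$, $2p\approx 1.28 > e^{0.2}\approx 1.22$ --- the inequality only sets in around $\beta\approx 0.23$. Your stated justification (``bounded versus growing, check small $\beta$ directly'') therefore fails. Since you fixed $\beta>\beta_1$ at the outset, where $p<1/2$ and hence $2p<1<e^{\beta}$ trivially, this does not damage your argument in the regime where the paper actually invokes the lemma; but the lemma is stated with no temperature restriction, and in the window where your inequality fails your own (correct) formula gives $\E_\nu[N_R]\gg e^{\beta R}$, so no first-moment argument can close it --- indeed the supercritical-branching behaviour $N_R\asymp(2p)^R$ on the survival event suggests the statement itself is problematic there, an issue your computation exposes and the paper's random-walk recursion masks. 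You should either restrict your claim honestly to $\beta>\beta_1$ (handling smaller $\beta$, as the paper implicitly does, via the percolation dichotomy) or acknowledge that the remaining window is not covered.
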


\begin{proof} 
To proceed, we prove $\lim_{R \to \infty}\nu\big[\{ N_R(\eta') > e^{\beta R}\} \big]]=0$ so that typically $\lim_R \frac{N_{R}}{e^{\beta R}}=0$.

 To this aim, let us consider the sequence $(N_{R})_{R}$ of random variables on $(\Omega',\mathcal{F}')$. As $R$ grows, there is a probability $p(\beta)^{2}$ of increasing $N_{R}$ by $1$ ({\em i.e.}\ opening two new bonds), {a probability $(1-p(\beta))^{2}$ of closing two bonds, etc}. Therefore, for $\nu$-a.e. $\eta'$,
\be
\label{eq.recurexpNr}
\bsplit
 \mathbb{E}_{\nu}\left[N_{R} \mid \mathcal{F}_{V_{R-1}}\right](\eta')  =\;& p^{2}\left(N_{R-1} +1\right)+2p(1-p) N_{R-1} + (1-p)^{2}\left(N_{R-1} -1\right)\\
 =\;& N_{R-1}(\eta')+2p-1.
\end{split}
 \ee
By induction we simply get $\mathbb{E}_{\nu}\left[N_{R} \right] = p +\left(R-1\right)\left(2p-1\right)$
and thus 
\be \label{Expect}
\lim_{R\to \infty} \frac{\mathbb{E}_{\nu}\left[N_{R} \right] }{e^{\beta R}}=0.
\ee
\end{proof}
To go beyond expectations, let us now bound the probability of a deviation larger than $e^{\beta R}$.
In the context of percolation on trees, at this point one usually exploits the special tree topology to obtain recursive relations. An example is the determination of the critical temperature of the ferromagnetic Ising model with constant interaction strength (see \cite{Lyons89}, Theorem 2.1).  Instead, here we obtain a (tight) bound on the probability of an exponential deviation following a slightly different route based on a combination of an exponential Chebyshev's inequality and a uniform bound on the cumulant generating function of $N_{R}$:

\begin{lemma}[Exponential deviation bound for $N_{R}$]\label{thm.expdb}
For any $\theta \geq 0$
$$
\nu[N_{R}(\eta')>e^{\beta R}] \leq e^{\theta \left(R-e^{R\beta}\right)}
$$
so that only sub-exponential deviations are allowed at any finite temperature $\frac{1}{\beta}$ as $R \to \infty$.
\end{lemma}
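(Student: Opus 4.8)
The plan is to follow exactly the route announced before the statement: an exponential Chebyshev inequality fed by a uniform bound on the cumulant generating function of $N_R$. Fix $\theta \geq 0$. Since $e^{\theta x}$ is increasing and $N_R \geq 0$, Markov's inequality applied to the non-negative variable $e^{\theta N_R}$ gives
$$
\nu[N_R(\eta') > e^{\beta R}] \leq e^{-\theta e^{\beta R}}\, \mathbb{E}_{\nu}\big[e^{\theta N_R}\big].
$$
Thus the whole statement reduces to the single estimate $\mathbb{E}_{\nu}[e^{\theta N_R}] \leq e^{\theta R}$: once this is in hand the right-hand side becomes $e^{\theta R}\, e^{-\theta e^{\beta R}} = e^{\theta(R - e^{R\beta})}$, which is precisely the claim, and since $R - e^{R\beta} \to -\infty$ the bound forbids exponential deviations at any fixed $\beta$.

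To obtain $\mathbb{E}_{\nu}[e^{\theta N_R}] \leq e^{\theta R}$ I would proceed recursively along the levels, using the increment structure already isolated in (\ref{eq.recurexpNr}). Conditionally on $\mathcal{F}_{V_{R-1}}$, the increment $\Delta_R := N_R - N_{R-1}$ equals $+1$ with probability $p^2$, $0$ with probability $2p(1-p)$, and $-1$ with probability $(1-p)^2$, independently of the past. Its conditional cumulant generating function is therefore the constant
$$
\Lambda(\theta) := \log\!\big(p^2 e^{\theta} + 2p(1-p) + (1-p)^2 e^{-\theta}\big),
$$
and the crucial (and essentially free) observation is the uniform bound $\Lambda(\theta) \leq \theta$ for every $\theta \geq 0$: since $e^{-\theta} \leq 1 \leq e^{\theta}$, replacing $e^{-\theta}$ and $1$ by $e^{\theta}$ only enlarges the bracket, and as the three probabilities sum to $1$ one gets $p^2 e^{\theta} + 2p(1-p) + (1-p)^2 e^{-\theta} \leq e^{\theta}$.

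With this per-step control the tower property closes the argument: $\mathbb{E}_{\nu}[e^{\theta N_R}] = \mathbb{E}_{\nu}\big[e^{\theta N_{R-1}}\, \mathbb{E}_{\nu}[e^{\theta \Delta_R}\mid \mathcal{F}_{V_{R-1}}]\big] \leq e^{\theta}\, \mathbb{E}_{\nu}[e^{\theta N_{R-1}}]$, and iterating down to the root, where $\mathbb{E}_{\nu}[e^{\theta N_0}] \leq 1$, yields $\mathbb{E}_{\nu}[e^{\theta N_R}] \leq e^{\theta R}$. Substituting into the Markov estimate completes the proof. The only genuinely delicate point is the probabilistic input, not the algebra: one must justify that, conditionally on $\mathcal{F}_{V_{R-1}}$, the law of $\Delta_R$ is the stated three-point distribution with moment generating function independent of the past. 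This is exactly what underlies the recursion (\ref{eq.recurexpNr}) and relies on the uncorrelatedness of the marginal $0$-events computed above; the inequality $\Lambda(\theta) \leq \theta$, by contrast, is immediate.
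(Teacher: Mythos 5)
Your proposal is correct and follows essentially the same route as the paper: exponential Chebyshev's inequality combined with the level-by-level recursion for the conditional moment generating function inherited from (\ref{eq.recurexpNr}), then a uniform bound on the cumulant. The only cosmetic difference is that you bound each per-step factor $p^2e^{\theta}+2p(1-p)+(1-p)^2e^{-\theta}$ by $e^{\theta}$ directly (via $(p+(1-p))^2=1$), whereas the paper computes the exact MGF $M_R(\theta,p)$ and then observes $K_R(\theta,p)\leq R\theta=K_R(\theta,1)$ --- the same estimate in different clothing.
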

\begin{proof}
For $R > 0$, exponential Chebyshev's inequality implies, $\forall \theta \in \R$,
\be
\label{eq.tch1}
\nu[N_{R}(\eta')>e^{\beta R}] \leq \mathbb{E}_{\nu}\left[e^{\theta N_{R}(\eta')} \right]e^{-\theta e^{R \beta}}.
\ee
In analogy with {(\ref{eq.recurexpNr})} we just have
\be
\bsplit
\mathbb{E}_{\nu}\left[e^{\theta N_{R}(\eta')} \mid \mathcal{F}_{R-1} \right] &= p^{2}e^{\theta (N_{R-1}+1)}+2p(1-p)e^{\theta N_{R-1}} + (1-p)^2 e^{\theta (N_{R-1}-1)} \\
&= \left(p^{2}e^{\theta}+2p(1-p)+(1-p)^{2}e^{-\theta} \right)e^{\theta N_{R-1}},\\
\end{split}
\ee
and hence by recurrence the full {\em moment generating function} (MGF) is given by
$$
 \mathbb{E}_{\nu}\left[e^{\theta N_{R}(\eta')}\right] \coloneqq M_{R}(\theta,p)= \left(p^{2}e^{\theta}+2p(1-p)+(1-p)^{2}e^{-\theta} \right)^{R}.
$$
A straightforward calculation shows that, for $\theta \geq 0$, the {\em Cumulant} $K_{R}(\theta ,p) \coloneqq \ln{M_{R}(\theta,p)}$
 satisfies, uniformly in $\theta$,
$$
K_{R}(\theta,0)=-R\theta \leq K_{R}(\theta,p) \leq R\theta=K_{R}(\theta,1),
$$
 from which the assertion follows directly (recall that $p=p(\beta)\in [0,1]$). 
\end{proof}

From Lemma \ref{thm.expdb} one gets that $\nu[N_R(\eta') > e^{\beta R}]$ converges to $0$ more than exponentially in the depth $R$, namely as $\sim e^{-e^{R}}$, which together with (\ref{Expect}) implies  Lemma \ref{Omegag} by {elementary probabilistic arguments. Indeed, for $l=e^{\beta}$ and $\epsilon=\epsilon_n=\frac{1}{2^n}>0$, we have
\begin{eqnarray*}
\nu(\Omega_g^c) &=&  \nu \left[ \cup_n \left\lbrace \liminf_R \frac{ N_R(\eta')}{l^{R}} > \epsilon_n \right\rbrace \right]
 \leq   \liminf_{R} \sup_\epsilon \epsilon \, e^{-l^{R}}\underset{R\to\infty}{\to} 0\;.
\end{eqnarray*}

Now, we consider the integers $n_k$ defined in Section 4.3 and introduce the sets 
 $$
\Omega_k \coloneqq \left\{ \eta' : n_k(\eta') < \infty \right\}.
 $$
 Consider the subset of $\Omega_g$ 
$$
\Omega_{f} \coloneqq \left \lbrace  \, \lim_{R \to \infty} \frac{N_{R}(\eta')}{e^{\beta R}}=0 \;{\rm and} \; n_{k}(\eta') < \infty, k=1,\ldots , N_{R}(\eta') \right \rbrace  = \Omega_g  \cap  \big( \cap_k \Omega_k \big).
$$

 We prove now that it is also $\nu$-typical.

\begin{lemma}\label{4.7}
$$\nu [ \Omega_{f} ]=1.$$
\end{lemma}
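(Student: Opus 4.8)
The plan is to reduce the statement to a first--moment estimate on the number of \emph{alternating} infinite paths of zeros and then apply a Markov inequality. Since $\nu(\Omega_g)=1$ by Lemma~\ref{Omegag}, and $\Omega_f=\Omega_g\cap\bigl(\cap_k\Omega_k\bigr)$, it suffices to prove $\nu\bigl(\cup_k\Omega_k^c\bigr)=0$, where $\Omega_k^c=\{\eta':n_k(\eta')=\infty\}$. I would first recall, from the transfer--matrix analysis in the proof of Lemma~\ref{thm.contmagn} (and the Remark following it), that $n_k(\eta')=\infty$ exactly when the product $P_n$ along $\pi_k$ never becomes strictly positive, which holds if and only if the environment $(h_n)$ along $\pi_k$ is one of the two strictly alternating sequences of \eqref{alter}. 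Hence $\cup_k\Omega_k^c$ is precisely the set of configurations possessing at least one infinite path of zeros whose off--path neighbouring cells are monochromatic with alternating signs.

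Next I would dispose of the non--percolative regime. When $p(\beta)\le\tfrac12$, the percolation computation above gives $\nu\bigl(N(\eta')=0\bigr)=1$, so $\nu$--a.e.\ configuration has no infinite path of zeros at all and $\cap_k\Omega_k$ holds vacuously; thus $\nu\bigl(\cup_k\Omega_k^c\bigr)=0$ trivially. It remains to treat the percolative regime $p(\beta)>\tfrac12$.

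In that regime I would introduce $M_R(\eta')$, the number of paths of zeros from the root to level $R$ whose environment alternates up to level $R$, and use the inclusion $\bigl\{\exists\text{ infinite alternating path}\bigr\}\subseteq\{M_R\ge1\}$, valid for every $R$. A one--step recursion in the spirit of \eqref{eq.recurexpNr} then controls $\mathbb{E}_{\nu}[M_R]$: an alternating path survives from level $R-1$ to level $R$ only if the off--path cell at the new level is monochromatic of the prescribed (flipped) sign, an event whose marginal $\nu$--probability is at most $1-p(\beta)$ (the probability that a cell is non--neutral, computed above). As each path--site offers two choices of continuing child on the binary tree, the effective branching number for alternating paths is bounded by $2\bigl(1-p(\beta)\bigr)$, which is $<1$ precisely because $p(\beta)>\tfrac12$. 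This yields $\mathbb{E}_{\nu}[M_R]\le C\,\bigl(2(1-p(\beta))\bigr)^R\to0$, whence $\nu(M_R\ge1)\le\mathbb{E}_{\nu}[M_R]\to0$ and therefore $\nu\bigl(\cup_k\Omega_k^c\bigr)\le\inf_R\nu(M_R\ge1)=0$. Together with $\nu(\Omega_g)=1$ this gives $\nu(\Omega_f)=1$.

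The main obstacle is making the per--level bound rigorous. The image field $(\eta'_j)$ is \emph{not} a Markov chain --- this is exactly the mechanism behind non--Gibbsianness --- so the recursion cannot be run directly on $\nu$. I would instead condition on the underlying spins of $\mu$ and carry out the transfer--matrix computation on the original n.n.\ chain, for which the Markov property holds (for extremal $\mu$, then extended to all Gibbs measures as in the footnote of the previous subsection). The delicate step is that the \emph{conditional} probability that an off--path cell is monochromatic of a given sign must be shown to stay below $\tfrac12$ uniformly; this is where the hypothesis $p(\beta)>\tfrac12$ is used, guaranteeing that the alternating--path branching factor $2\bigl(1-p(\beta)\bigr)$ remains strictly subcritical throughout the percolative regime.
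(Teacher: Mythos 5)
Your overall strategy is the same as the paper's: reduce Lemma~\ref{4.7} to showing that infinite \emph{alternating} paths of zeros (the ``zebra'' events behind $n_k=\infty$, cf.\ \eqref{alter}) have $\nu$-probability zero, and kill them by a first-moment / branching estimate with branching number $2$ and a per-step cost strictly below $p_{\rm c}=\frac{1}{2}$, exactly the ``domination by a subcritical percolation event'' that the paper invokes. Your trivial disposal of the regime $p(\beta)\le \frac{1}{2}$ is also consistent with the paper's structure. The difference --- and the problem --- is where the per-step bound $q<\frac{1}{2}$ comes from.

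Your per-step factor is the \emph{marginal} quantity $1-p(\beta)=\nu[\eta'_j\neq 0]$, and your argument hinges on the claim that the \emph{conditional} probability, given the path of zeros and the alternating history up to level $R-1$, that the next off-path cell is monochromatic of the prescribed sign is still at most $1-p(\beta)$. This is precisely the step you flag as ``delicate'', but you never prove it: your closing sentence justifies it ``because $p(\beta)>\frac{1}{2}$'', which is circular --- that hypothesis gives $2(1-p(\beta))<1$ \emph{once} the conditional bound is known, but it says nothing about whether conditioning on the zero-path/alternating past can push the probability of a monochromatic cell above its marginal value; and since $(\eta'_j)$ is not Markov (as you yourself note), conditional-equals-marginal cannot simply be invoked. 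The gap is closable along the lines you sketch: when $p(\beta)>\frac{1}{2}$ one is in a high-temperature regime inside the uniqueness region (indeed $p(\beta_{\rm c})=\frac{7}{16}<\frac{1}{2}$), so $\mu$ is the symmetric Markov chain, and by the tree Markov property the off-path subtree is conditionally independent of the whole history given the spin $t$ at the path site; the conditional chance that its cell is monochromatic of sign $s$ is then exactly $\frac{e^{st\beta}}{2\cosh\beta}\left(\frac{e^{\beta}}{2\cosh\beta}\right)^{2}\le\left(\frac{e^{\beta}}{2\cosh\beta}\right)^{2}=1-p(\beta)$. But this computation \emph{is} the content of the step, and it is absent from your proposal. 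The paper closes the same hole differently and more robustly: it bounds the conditional probability of continuing the alternation by its infinite-temperature value via attractivity, obtaining $\left(\frac{1}{3}\right)^{2}+\left(\frac{1}{3}\right)^{2}=\frac{2}{9}<p_{\rm c}$ \emph{uniformly in} $\beta$ --- at $\beta=0$ the spins are i.i.d., so the conditioning problem disappears --- which also makes your case split unnecessary and keeps the conclusion valid at all temperatures in one stroke. (A minor further point: $n_k(\eta')=\infty$ corresponds to environments that are \emph{eventually} alternating, not only the two sequences alternating from the root; this costs you a countable union over starting levels, harmless for your first-moment bound but worth stating.)
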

\begin{proof}
Consider the event that, up to level $R-1$, the configuration $\eta'$ contains a path of zeros along sites $\pi$ and alternating field $h_n=(-1)^n$ up to level $R-1$ at primed neighbors of $\pi$, that is, in the notations of Lemma \ref{thm.contmagn}, the event
$$
\mathcal{A}_{R-1}(\eta') = \left\lbrace h_{n} = (-1)^{n}, \; \forall n=1,\ldots,R-1 {\rm  \; at\; neighboring\; sites\; of\; a\; path\; of\; zeros}\; \pi \right\rbrace.
$$
Then
$$
\mathbb{P}\left[h_{R}\cdot h_{R-1}=-1 | \mathcal{A}_{R-1}(\eta') \right]_{\beta} \leq \mathbb{P}\left[h_{R}\cdot h_{R-1}=-1 | \mathcal{A}_{R-1}(\eta') \right]_{\beta=0}  = \left(\frac{1}{3}\right)^{2}+\left(\frac{1}{3}\right)^{2}<  p_{c},
$$

where we have used the infinite temperature limit gives an upper bound by attractivity, and $p_c=p_{c}(2)=\frac{1}{2}$ is critical threshold for Bernoulli percolation on a Cayley tree order $2$.
Thus, if $h_{R-1}=+1$ with $\eta'_{\pi_R}=0$ then there are $3$ choices of non-primed spins in $\eta'_{\pi_{R-1}}$, and $3$ choices in $\eta'_{\pi_R}=0$  (Fig.~\ref{fig.zigzag}). We can thus dominate the probability of this event of $h_R$'s being alternating (also called ``zebra percolation'', see~\cite{GRR13}) by a subcritical percolation event, with parameter $q<1/2$, and this probability goes to $0$ as $R \to \infty$. This argument is readily extended to each path of zeros since, after conditioning with our events, the subtrees become independent. Thus not only $\nu[n_0<\infty]=1$ but also $\nu[\Omega_k]=\nu[n_k<\infty]=1$, $\forall k \geq 1$.

 Eventually, by independence combined with $\nu[\Omega_g]=1$, we finally get
$$
\nu[\Omega_f]=\nu\big[\Omega_g \cap \big( {\displaystyle\cap_k} \{n_k < \infty\} \big)\big] =  \nu[\Omega_g] \cdot \prod_k \nu[\Omega_k] = 1\cdot \prod_k 1=1.
$$

\begin{figure}[!hbtp]
\centering
\includegraphics[width=.75\textwidth]{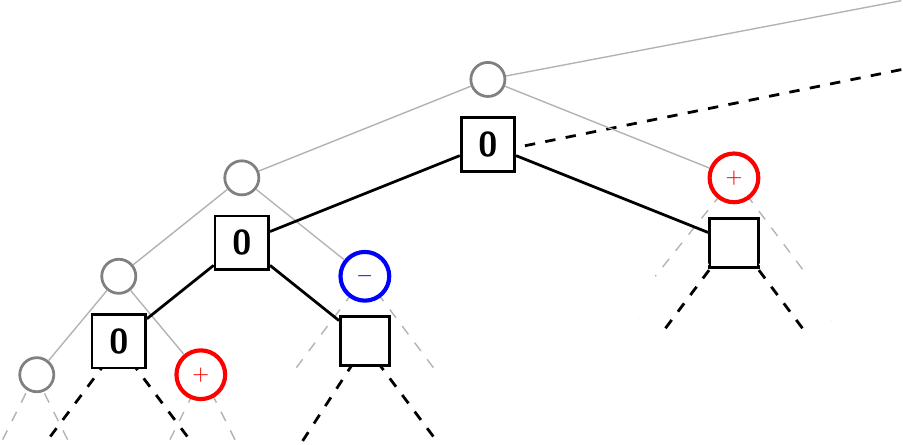} 
\caption{Slice of a configuration $\eta'$ at generations $R-1,R, R+1$. 
}
\label{fig.zigzag}
\end{figure}
\end{proof}

Now, write $\Omega_c$ for the set of good configurations for the conditional magnetization,
$$
\Omega_c = \{ \eta' \in \Omega' : \langle \sigma'_r \rangle^\cdot \; {\rm is \; essentially \; continuous} \}.
$$
\begin{lemma} \label{Omegac}
$$
\Omega_f \subset \Omega_c.
$$
\end{lemma}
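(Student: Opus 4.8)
The plan is to show the set inclusion $\Omega_f \subset \Omega_c$ directly, by verifying that every configuration $\eta' \in \Omega_f$ is a point of essential continuity for the conditional magnetization. By definition, $\Omega_f$ consists of those $\eta'$ for which simultaneously $\lim_{R\to\infty} N_R(\eta')/e^{\beta R} = 0$ and $n_k(\eta') < \infty$ for every $k = 1, \ldots, N_R(\eta')$. The second condition guarantees that $\eta'$ is not one of the exceptional alternating configurations from \eqref{alter}, so the finite-volume estimate of Lemma \ref{lem.suffcondec} applies to each of its paths of zeros. Thus the whole argument is essentially a matter of combining the two defining properties of $\Omega_f$ with the sufficient condition already established.

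First I would fix $\eta' \in \Omega_f$ and recall the sufficient condition \eqref{suffcondec} of Lemma \ref{lem.suffcondec}, which gives, for every $R \geq 0$,
\be
\sup_{\omega'_1,\omega'_2 \in \mathcal{N}^R(\eta')} \mid \langle \sigma'_r \rangle^{\omega'_1,R}-\langle \sigma'_r \rangle^{\omega'_2,R} \mid \leq C \cdot N_{R}(\eta') \cdot (e^{-\beta})^R.
\ee
The right-hand side can be rewritten as $C \cdot \big(N_R(\eta')/e^{\beta R}\big)$, since $(e^{-\beta})^R = e^{-\beta R}$. Because $\eta' \in \Omega_g \supset \Omega_f$, we have $N_R(\eta')/e^{\beta R} \to 0$ as $R \to \infty$, so the bound tends to $0$. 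This shows that the conditional magnetization $\langle \sigma'_r\rangle^{\cdot, R}$ is a Cauchy sequence (in the supremum over neighborhoods) as $R \to \infty$, hence converges to a well-defined limit independent of the boundary behavior outside $V_R$, which is exactly the statement that $\eta'$ is a point of essential continuity. Therefore $\eta' \in \Omega_c$.

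I would then make explicit why Lemma \ref{lem.suffcondec} is legitimately applicable to $\eta'$: its hypotheses exclude the null configuration and the alternating configurations of \eqref{alter}. The null configuration is excluded because any $\eta'$ with $N_R(\eta')/e^{\beta R}\to 0$ cannot have the maximal (fully percolating) structure of $\omega'^0$, and more directly because on $\Omega_f$ the number of paths grows strictly sub-exponentially. The alternating configurations are precisely those for which some $n_k(\eta') = \infty$, and these are ruled out by the defining requirement $n_k(\eta') < \infty$ for all $k$ in the definition of $\Omega_f$. Hence every $\eta'\in\Omega_f$ satisfies the hypotheses of Lemma \ref{lem.suffcondec}, and the estimate above is valid.

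The step I expect to carry the most weight is ensuring that the finite-$R$ continuity estimate genuinely upgrades to \emph{essential} continuity in the sense of Definition \ref{def.ped}, i.e.\ that control of the oscillation over the cylinder neighborhoods $\mathcal{N}^R(\eta')$ for all large $R$ is equivalent to the non-existence of a point of essential discontinuity at $\eta'$. This is the conceptual heart of the inclusion, though it follows from the basis-of-neighborhoods structure already set up in Section \ref{sss.cn0}: the $\mathcal{N}^R(\eta')$ form a neighborhood basis at $\eta'$, so a uniform bound on the oscillation of the conditional expectation over each $\mathcal{N}^R$, decaying to zero, precisely forbids the existence of the $\delta>0$ and the two separating neighborhoods demanded by Definition \ref{def.ped}. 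Once this equivalence is invoked, the inclusion is immediate and the proof concludes.
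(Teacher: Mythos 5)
Your proposal follows the same route as the paper's own proof: fix $\eta' \in \Omega_f$, use the finiteness of the $n_k$'s to license Lemma \ref{lem.suffcondec}, rewrite the right-hand side of \eqref{suffcondec} as $C\, N_R(\eta')\,e^{-\beta R}$, and let the defining property of $\Omega_g$ drive the oscillation over the neighborhood basis $\mathcal{N}^R(\eta')$ to zero, which forbids the $\delta$-separation demanded by Definition \ref{def.ped}. Your closing paragraph, spelling out why vanishing oscillation over this basis is exactly essential continuity, is a correct expansion of what the paper leaves implicit, and the ``Cauchy sequence'' phrasing in between, while not the cleanest formulation, is harmless.

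There is, however, one genuinely false step: your claim that the null configuration is excluded from $\Omega_f$ because ``any $\eta'$ with $N_R(\eta')/e^{\beta R}\to 0$ cannot have the maximal (fully percolating) structure of $\omega'^0$''. On the binary tree one has $N_R(\omega'^0)=2^R$, hence $N_R(\omega'^0)/e^{\beta R}=\left(2e^{-\beta}\right)^R \to 0$ whenever $\beta > \ln 2$; so sub-exponential growth of $N_R$ does \emph{not} rule out the null configuration at low temperature. This is not a cosmetic point: Lemma \ref{lem.suffcondec} explicitly excludes $\omega'^0$ from its hypotheses, and for $\beta>\ln 2$ the estimate \eqref{suffcondec} applied to $\omega'^0$ would contradict the essential discontinuity of $\omega'^0$ established in \cite{ALN1}, so the exclusion truly has to come from somewhere other than membership in $\Omega_g$. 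Your instinct that this case must be addressed is sound --- the paper's own proof is silent on it, resting only on $C(\eta')<\infty$ --- but the argument you supply does not close it. A correct repair is to remove the single configuration $\omega'^0$ from $\Omega_f$ by hand: it is a $\nu$-null singleton, so Lemma \ref{4.7} and the almost-Gibbsianness conclusion are unaffected, while every remaining $\eta'\in\Omega_f$ then satisfies the hypotheses of Lemma \ref{lem.suffcondec} and your argument goes through.
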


\begin{proof} 

This is direct from the definition of our sets and previous lemmata. Consider $\eta' \in \Omega_f$, so that in particular $C(\eta') < \infty$, as well as $C=\sup_\eta' C(\eta')<\infty$. One has
$$
\frac{N_R(\eta')}{e^{\beta R}} \underset{R \to \infty}{\longrightarrow} 0
$$
implying essential continuity by (\ref{suffcondec}), so that $\eta' \in \Omega_c$. 

\end{proof}

Our percolation techniques yields thus the $\nu$-a.s.~essential continuity of conditional magnetizations and Lemma 5.

The extension of  this almost-sure continuity to the one of conditional expectations of any local function is straightforward by decomposition into simple functions, in the framework of Georgii~\cite{HOG}, Chapter 2, and Theorem~\ref{thm3} follows.

\section{Possible extensions and research perspectives}
\label{sec.rp}

{It would be interesting to understand if our almost Gibbsianness result holds for Cayley trees of general order $k\geq 2$. 
Concerning the crucial step in Theorem \ref{thm.expdb}, here we only observe that the MGF of the number of zeros reaching level $R$ in primed configuration $\eta'$, $N^{(k)}_{R}(\eta')$, is 
$$
M^{(k)}_{R}(\theta,p^{(k)})=\left[e^{-\theta} \left(p^{(k)} \left(e^\theta-1\right)+1\right)^k\right]^R,
$$
where now $p^{(k)}\equiv p^{(k)}(\beta)$ is the probability that a primed $0$ percolates at inverse temperature $\beta$ for the Cayley tree $\mathcal{T}^{k}$ ({\em i.e.}\ generalizing (\ref{eq.p2b})), and a $k$ {\em vs} $\beta$ tradeoff becomes possible.  Recently, ferromagnetic Ising models on Cayley trees subjects to inhomogeneous external fields have attracted some interest that could be useful for our purposes (see {\em e.g.}\ \cite{BEvE} for a recent work concerning spatially dependent external fields that are ``small perturbations'' of the critical external field value).

{Moreover,} the model considered in this paper can be naturally generalized
to non-rooted Cayley trees and to non-uniform $c_{j}$ possibly different from two, or with other sizes of {cells}. The majority rule could also be generalized as a stochastic transformation. {For example,} let $\epsilon \in [0,1]$ and $\xi$ be a Bernoulli random
variable with parameter $\epsilon$ and values $0$ or $1$. Define the
deterministic map $t_{\epsilon} \colon \Omega \longrightarrow \Omega';  \omega \longmapsto \omega'$
where $\omega'$ is defined by
\begin{displaymath}
\omega'_j=\left\{
\begin{array}{lllll}\; +1 \; & \textrm{iff}& \frac{1}{c}\sum_{i \in C_j}
\omega_i=+1 \; \textrm{and} \; \xi = 0 \\
\; -1 \;& \textrm{iff}& \frac{1}{c}\sum_{i \in C_j}
\omega_i=-1 \; \textrm{and} \; \xi = 0 \\
\; 0 \; & \textrm{otherwise.}
\end{array} \right.
\end{displaymath}
Its action is described by a probabilistic kernel $T_{\epsilon}$ defined by:
\begin{displaymath}
\forall A' \in \mathcal{F}', \forall \omega \in \Omega, \quad T_{\epsilon}(\omega,A')=(1-\xi)\delta_{t_{\epsilon}(\omega)}(A')+ \xi\delta_{0}(A') .
\end{displaymath}
 It could be interesting to study the difference between the deterministic transformation and the stochastic one, as this could play a role on the degree of non-Gibbsianness of the image measure, similar to the van den Berg example on the integers  (see \cite{LOR1}).
 
\medskip

{\bf Acknowledgements:} {We thank C. Maes for its early interest and A. van Enter for stimulating discussions and a careful reading of the manuscript. We also thank an anonymous referee for its constructive criticism in the previous version of this work. The research of the authors has been partially supported by CNRS, within the Franco-Dutch IRP B\'ezout-Eurandom.}

\addcontentsline{toc}{section}{\bf References}

\end{document}